\newtheorem{insight}{Insight}
\newcommand{\ignore}[1]{}
\let\bfseries=\undefined
\DeclareRobustCommand\bfseries
\def\Orthant_j{{\mathcal O}_{j}}
\DeclareRobustCommand\widecheck[1]{{\mathpalette\@widecheck{#1}}}
\def\@widecheck#1#2{%
    \setbox\z@\hbox{\m@th$#1#2$}%
    \setbox\tw@\hbox{\m@th$#1%
       \widehat{%
          \vrule\@width\z@\@height\ht\z@
          \vrule\@height\z@\@width\wd\z@}$}%
    \dp\tw@-\ht\z@
    \@tempdima\ht\z@ \advance\@tempdima2\ht\tw@ \divide\@tempdima\thr@@
    \setbox\tw@\hbox{%
       \raise\@tempdima\hbox{\scalebox{1}[-1]{\lower\@tempdima\box
\tw@}}}%
    {\ooalign{\box\tw@ \cr \box\z@}}}
\begin{document}
\title{Optimally Blending Honeypots into Production Networks: Hardness and Algorithms}
%
%

\author{Md Mahabub Uz Zaman\inst{1}
\and
Liangde Tao\inst{2}
\and 
Mark Maldonado\inst{3}
\and
Chang Liu\inst{1}
\and
Ahmed Sunny\inst{1}
\and
Shouhuai Xu\inst{3}
\and
Lin Chen\inst{1}
}
%
%

\institute{{Texas Tech University, Lubbock TX 79409, USA} \and
{Zhejiang University, Hangzhou 310027, China} \and
{University of Colorado Colorado Springs, Colorado Springs, CO 80918, USA}}

\maketitle       

\begin{abstract}
\vspace{-6mm}
Honeypot is an important cyber defense technique that can expose attackers' new attacks (e.g., zero-day exploits). However, the effectiveness of honeypots has not been systematically investigated, beyond the rule of thumb that their effectiveness depends on how they are deployed. In this paper, we initiate a systematic study on characterizing the cybersecurity effectiveness of a new paradigm of deploying honeypots: blending honeypot computers (or IP addresses) into production computers. This leads to the following Honeypot Deployment (HD) problem: {\em How should the defender blend honeypot computers into production computers to maximize the utility in forcing attackers to expose their new attacks while minimizing the loss to the defender in terms of the digital assets stored in the compromised production computers?} We formalize HD as a combinatorial optimization problem, prove its NP-hardness, provide a near-optimal algorithm (i.e., polynomial-time approximation scheme). We also conduct simulations to show the impact of attacker capabilities.

\keywords{Cybersecurity Dynamics \and 
Honeypot Deployment \and
Approximation Algorithm \and
Risk Attitude \and
Combinatorial Optimization}
\end{abstract}

\section{Introduction}

Cyberspace is complex and extremely challenging to defend because there are so many vulnerabilities that can be exploited to compromise its components, including both technological ones (e.g., software or network configuration vulnerabilities) and non-technological ones (e.g., human factors) \cite{Pendleton:2016,RosaSocialEngineeringKillChainSciSec2022}.  It would be ideal if we could prevent all attacks; unfortunately, this is not possible for reasons that include {\em undecidability} of computer malware \cite{Adleman-Crypto-88}. Not surprisingly, cyber attacks have caused tremendous damages \cite{Solarwinds2021,Cybersecurity-Venture}. 

\ignore{
To defend cyberspace against attacks, there is a range of approaches, including: preventive cyber defenses (e.g., access control), which aim to prevent attacks as much as possible; reactive cyber defenses (e.g., anti-malware tools and intrusion detection systems), which aim to detect successful attacks; adaptive cyber defenses, which aim to dynamically adapt the cyber defense posture after detecting attack signals (e.g., strengthening cybersecurity policy on demand); proactive cyber defenses, which aim to dynamically adapt the cyber defense posture despite that no attack signals have been detected (e.g., moving-target defense or honeypot); active cyber defenses, which aim to use 
``good-ware'' or ``green-ware,'' to detect and clean up compromises and ``kill'' malware \cite{XuBookChapterCD2019}. 
}

Honeypot \cite{provos2004virtual,cohen2006use,nawrocki2016survey} is a deception technique for luring and exposing cyber attacks, especially new attacks or zero-day exploits. The basic idea is to set up fake services that are open in the Internet, meaning that any access to these fake services can be deemed as malicious and the defender can learn the attacks by monitoring these fake services. The importance and potential of honeypots have attracted a due amount of attention (e.g.,
\cite{wagener2011adaptive,DBLP:conf/gamesec/HuangZ20,DBLP:conf/gamesec/HuangZ19,new.att.honeypot,honeypot.pca,li2010towards,DBLP:journals/ccr/KreibichC04,DBLP:journals/cn/PortokalidisB07,DBLP:conf/uss/AnagnostakisSAXMK05}). However, the effectiveness of honeypots has not been systematically characterized. The rule of thumb is that their effectiveness depends on how they are deployed. The traditional way of deploying honeypots is to isolate a set of honeypot computers from any production network. However, it is well known that such honeypots can be easily figured out, and thus evaded, by attackers. One approach to addressing this problem is to ``blend'' honeypot computers into the production computers of an enterprise network.

\noindent{\bf Our Contributions}. This paper makes two contributions. The {\em conceptual} contribution is to initiate the study on systematically characterizing the effectiveness of blending honeypot computers with production computers, leading to the formalization of a Honeypot Deployment (HD) problem: {\em How should the defender blend honeypot computers with production computers to maximize the utility of honeypot in forcing attackers to expose their new attacks,
while minimizing the loss to the defender in terms of the digital assets stored in the compromised production computers?} 
One salient feature of the formalization is that it can naturally incorporate attacker's {\em risk attitude} (i.e., risk-seeking, risk-neutral, or risk-averse).
The {\em technical} contribution is that we show: (i) the decision version of the HD problem is NP-complete; and (ii) we present a near-optimal algorithm to solve it, namely a Polynomial-Time Approximation Scheme (PTAS) by 
leveraging a given sequence of attacker’s preference (i.e., attack priority) resulting from the attacker's reconnaissance process and the attacker's risk attitude.
We also conduct simulation studies to draw insights into the aspects which we cannot analytically treat yet, which would shed light on future analytic research.


\section{Problem Statement}
\label{sec:problem-statement} 
\noindent{\bf Intuition}. It is non-trivial to model the Honeypot Deployment (HD) problem, 
so we start with an intuitive discussion.
Consider (for example) an enterprise network with some {\em production} computers, which provide real business services, and a set of IP addresses. Some IP addresses are assigned to these production computers. The defender deploys some {\em traditional} defense tools (e.g., anti-malware tools and intrusion-prevention systems) to detect and block recognizable attacks. However, these tools can be evaded by new attacks (e.g., zero-day exploits), which are not recognizable by them, per definition. In order to defend the network against new attacks, the defender can blend some {\em honeypot} computers into the production ones, meaning that some of the remaining (or unassigned) IP addresses are assigned to honeypot computers, and some IP addresses may not be used at all (in which case we say these IP addresses are assigned to {\em dummy} computers). Each computer (production, honeypot, and dummy alike) will be assigned one unique IP address. Note that traditional defense tools are still useful because they can detect and block recognizable attacks.

The research is to investigate how to {\em optimally} assign IP addresses to computers to benefit the defender, where the meaning of optimization is specified as follows.
{\em First}, suppose deploying one honeypot computer incurs a cost to the defender, which is plausible because the honeypot computer does not provide any business-related service. 
{\em Second}, when a new attack is waged against a production computer, it incurs a loss to the defender because the digital assets stored in the production computer are compromised and the new attack cannot be blocked by traditional defense tools. 
{\em Third}, when a new attack is waged against a honeypot computer, it incurs no loss to the defender but does incur a cost to the attacker because the new attack now becomes recognizable to the defender.
In this case, we say a {\em valid} new attack becomes {\em invalid} (i.e., no more useful to the attacker).
{\em Fourth}, the usefulness of honeypot computers is based on the premise that the attacker does not know which IP addresses are assigned to honeypot computers; otherwise, the attacker can simply avoid attacking them. In the real world, the attacker often uses a {\em reconnaissance} process, which can be based on a range of techniques (from social engineering to technical methods), to help determine which IP addresses may be assigned to honeypot computers. The reconnaissance process often correctly detects which IP addresses are assigned to the dummy computers because no attempt is made by the defender to disguise these IP addresses (otherwise, they can be deemed as honeypot computers). The reconnaissance process is not perfect in identifying the honeypot computers, meaning that when the attacker decides to attack a computer, which the attacker deems as a production computer, the computer is actually a honeypot one, causing the attacker to lose the new attack. Because of the uncertainty associated with the outcome of the reconnaissance process, the attacker would decide whether to attack a computer with some probability, which reflects the attacker's reconnaissance capability, 
the attacker's risk attitude (i.e., risk-seeking, risk-neutral, or risk-averse), and the honeypot computer's capability in disguising itself as a production computer.
To accommodate attacker reconnaissance capabilities, we assume the probabilities are given; this is reasonable because deriving such probabilities 
is orthogonal to the focus of this study. 


\noindent{\bf Problem Formalization}.
Let $\mathbb{N}$ denote the set of positive integers and $\mathbb{R}$ the set of real numbers. For any positive integer $z\in \mathbb{N}$, we define $[z]=\{1,\ldots,z\}$. 
Suppose the defender is given $n\in\mathbb{N}$ production computers and $n+m$ IP addresses, meaning that $m\in\mathbb{N}$ is the number of IP addresses that can be used to deploy honeypot computers and dummy computers (if applicable). 
{Suppose the defender needs to deploy up-to $m\in\mathbb{N}$ honeypots computers. Each honeypot computer may incur a cost $c\in\mathbb{N}$, which may vary depending on the degree of sophistication embedded into the honeypot computer (i.e., the most sophisticated a honeypot computer, the more difficult for the attacker's reconnaissance to determine whether it is a honeypot or production computer). Suppose the total budget for deploying the honeypot computers is $B\in \mathbb{N}$. 
This means that the defender will select $h$-out-of-the-$m$ IP addresses for deploying honeypot computers subject to the total cost for deploying the $h$ honeypot computers is at most $B$. 
Then, $m-h$ dummy computers are respectively deployed at the remaining $m-h$ IP addresses.}
Recall that the attacker can correctly recognize the dummy computers and will not attack them. We use the term ``non-dummy computer'' to indicate a computer that is a production or honeypot computer.

For ease of reference, we use ``computer $j$'' to denote ``the computer assigned with IP address $j$, where the computer may be a production, honeypot, or dummy one.'' Let $v_{j,D}$ be the value of the digital assets stored in computer $j$ (e.g., sensitive data and/or credentials).
This leads to a unified representation: an IP address $j\in [n+m]$ is associated with a value $v_{j,D}\in\mathbb{N}$ as assessed by the defender and a cost $c_j\in\mathbb{N}$ to the defender, where 
\begin{itemize}[noitemsep,nolistsep]
    \item $v_{j,D}>0$ if $j$ is a production computer, and $v_{j,D}=0$ otherwise.
    \item $c_j>0$ if $j$ is a honeypot computer, and $c_j=0$ otherwise.
\end{itemize}

\ignore{
In order to describe how the defender determines which computer be deployed as a honeypot, let function $L:\{1,\ldots,m\}\mapsto \{1,\ldots,n\}$ denote the mapping from a honeypot index to a computer index. This means honeypot $i$ ($1\leq i \leq m$) is disguised as computer $L(i)$, which has value $v_{L(i),D}=0$ to the defender because the honeypot contains no sensitive information and $v_{L(i),A}$ to the attacker.
Suppose the defender can deploy honeypots within the total budget $B$, where $B=m$ by assuming $c_i=1$ for $1\leq i \leq m$ as mentioned above. 
}

Suppose the attacker has $r$ valid new attacks which are not recognized by the traditional defense tools employed by the defender, where $r$ represents the attacker's budget. Before using these attacks, the attacker often conducts a reconnaissance process to identify: (i) the value of digital assets stored in computer $j\in [n+m]$, denoted by $v_{j,A}\in\mathbb{N}$, which is the attacker's perception of the ground-truth value $v_{j,D}$ that is not known to the attacker (otherwise, the attacker would already know which computers are honeypot ones); and (ii) the probability or likelihood that a non-dummy computer $j$ is a honeypot computer, denoted by $q_j$, which is the probability that the attacker will {\em not} attack computer $j$ (i.e., $1-q_j$ is the probability that the attacker will attack it). 
{ The attacker knows which computer is a dummy one and will not attack any dummy computers. To summarize, let $x_j\in\{0,1\}$ be an indicating vector such that $x_j=1$ if computer $j$ is a production or honeypot computer, and $x_j=0$ if computer $j$ is a dummy computer, then we can see that the attacker will attack computer $j$ with probability $(1-q_j)x_j$, which is $1-q_j$ for a non-dummy computer $j$ and $0$ otherwise.} Note that $v_{j,A}$ and $q_j$ together reflect the attacker's reconnaissance capability.

When the attacker attacks computer $j$ which happens to be a production one, the loss to the defender (i.e., the reward to the attacker) is $v_{j,D}>0$; 
when the attacker attacks computer $j$ which happens to be a honeypot one, the loss to the defender is $v_{j,D}=0$ and the attacker's budget decreases by 1 because the new attack now becomes {\em invalid} (i.e., recognizable to the defender).
The attacker cannot wage any successful attack after its $r$ new attacks become invalid.

The research question is to identify the optimal strategy in assigning some of the $m$ IP addresses to { honeypot computers} under budget constraint $B$ so as to minimize the {\em expected loss} to the defender.
{ The assignment of IP addresses to the production, honeypot, and dummy computers is called a {\em defense solution}, which is characterized by a vector $\vec{x}=(x_1,x_2,\cdots,x_{m+n})\in\{0,1\}^{n+m}$ where $x_j=1$ means computer $j$ is a production or honeypot computer, and $x_j=0$ means it is a dummy computer.}
Given a fixed {\em defense solution} $\vec{x}$, 
the loss to the defender is defined as the total value of the production computers that are attacked. The loss to the defender is probabilistic because the attacker attacks a non-dummy computer with a probability, meaning that we should consider the {\em expected loss}. 
To compute the expected loss, we need to specify the probability distribution of the loss, which depends on the probability distribution of the attacker's decisions on attacking non-dummy computers. 
To characterize this distribution, we introduce two concepts, {\em attack sequence} and {\em attack scenario}; {the former is a stepping-stone for introducing the latter, which is used to compute the expected loss}. Given a defense solution, we assume the attacker sequentially decides whether to attack computer $j\in [n+m]$ according to probability $q_j$. The order according to which the attacker makes decisions is called {\em attack sequence}. 

\begin{figure}[!htbp]
	\centering
		\vspace{-.5em}
	\includegraphics[width=100mm]{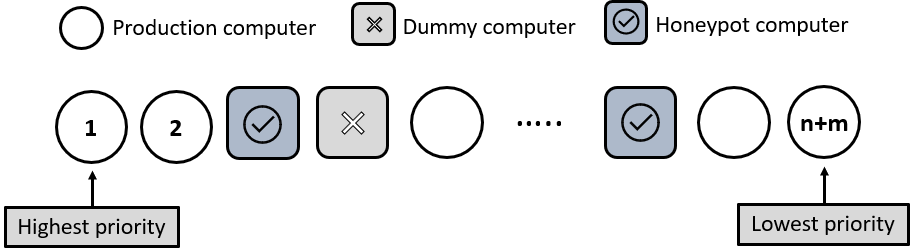}
	\vspace{-0.5em}
	\caption{Illustrating the concepts of production, honeypot, and dummy computers and the idea of {\em attack sequence}.}
	\label{fig:fig_0}
\end{figure}

{As illustrated in Figure \ref{fig:fig_0}, we use circles to represent production computers and squares to represent the $m$ IP addresses for which the defender needs to decide whether to deploy a honeypot or dummy computer.
An attack sequence represents the attacker's choice of priority in considering which non-dummy computers to attack, where priority depends on the attacker's perception of the value of computer $j$, namely $v_{j,A}$, and the probability $q_j$, and the attacker's risk attitude. We assume such attack sequences are given as input to the present study because attaining these attack sequences is an orthogonal research problem.}
To define {\em attack scenario}, we should specify when the attacker stops. The attacker stops when any of the following two conditions hold: (i) after attacking some honeypot computer which causes the attacker's budget to decrease from $1$ to 0, meaning that the attacker has no more valid new attack to use; (ii) the attacker finishes attacking all computers it would like to attack (based on its probabilistic decision) even if there are still valid new attacks, meaning that the attacker only considers whether or not to attack a non-dummy computer once, which is plausible. {Specifically, given a defense solution $\vec{x}$ and an attack sequence,} 
the decisions on whether or not to attack computers $j\in [j^*]$
is called an {\it attack scenario}; that is, an attack scenario $s$ is a binary vector $\pi_s=(\pi_s(1),\pi_s(2),\cdots,\pi_s(j_s))\in\{0,1\}^{j_s}$,
where $\pi_s(j)=1$ means computer $j$ is indeed attacked and $\pi_s(j)=0$ otherwise, and $j_s\le n+m$ is the last computer that is attacked. Note that $\pi_s(j_s)=1$ by definition. Recall that the attacker attacks computer $j$ with probability $1-q_j$, meaning $\Pr[\pi_s(j)=1]=1-q_j$ and $\Pr[\pi_s(j)=0]=q_j$. Define $\Pr[\pi_s]$ as the probability that attack scenario $\pi_s$ occurs, then we have
$$\Pr[\pi_s]{=}\prod_{j\in[j_s]:\pi_s(j)=1, x_j=1}(1-q_j)\cdot\prod_{j\in [j_s]:\pi_s(j)=0,x_j=1}q_j.$$
Note that if $x_j=0$ then computer $j$ is a dummy computer and the attacker will not attack it at all, so it does not contribute to the probability $\Pr[\pi_s]$. Let $\mathcal{P}\subset [n+m]$ be the subset of the IP addresses that are assigned to production computers and $\mathcal{H}:=[n+m]\setminus \mathcal{P}$ be the subset of the remaining IP addresses. 
If attack scenario $\pi_s$ occurs, then the loss to the defender is the total value of all production computers attacked by the attacker, which is
$Loss(\pi_s){=}\sum_{j\in\mathcal{H}:\pi_s(j)=1}v_{j,D}.$
Given the loss incurred by a specific attack scenario as shown in the above equation, 
the expected loss is defined over all possible attack scenarios. Not every vector in $\{0,1\}^{k}$ where $k\le n+m$ is necessarily an attack scenario; a vector $\pi\in\{0,1\}^k$ is an attack scenario if and only if
\begin{itemize}
\item $k=n+m$ and $|\{j\in [n+m]\cap \mathcal{H}:\pi(j)=1\}|\le r$; or
\item $k<n+m$, $\pi(k)=1$ and $|\{j\in [k]\cap\mathcal{H}:\pi(j)=1\}|= r$. 
\end{itemize}
We call a vector $\pi$ satisfying the preceding condition an {\em attack scenario-compatible} vector. Let $\mathcal{V}$ be the set of all attack scenario-compatible vectors. 
Then, the expected loss of the defender {with respect to a fixed defense solution} is:
\begin{equation}
\label{eq:expected-loss}\mathbb{E}[Loss]=\sum_{\pi\in \mathcal{V}} Loss(\pi)\Pr[\pi].
\end{equation}
In summary, we have:

\ignore{
\footnote{to Lin: Eq.(1) needs to be redefined}
{\color{red}
\begin{equation}
\label{eq:expected-loss}
q_t \cdot v_{i_t} \cdot \sum_{k=0}^{k=r-1} p_k.
\end{equation}
}

How should the defender with budget $B$, namely $m$ as $c_i=1$, employ honeypots (i.e., determining the mapping $L$) such that the expected loss to the defender is minimized, where the loss is incurred by an attacker of $r$ attack tools {\color{red}when all these $r$ attack tools become useless to the attacker (i.e., all of the $r$ attack tools have been trapped by some honeypot computers)}? 
}


\vspace{-1.5em}
		
\begin{center}
	\fbox{\begin{minipage}{\textwidth}	
{\bf Honeypot Deployment (HD) Problem}\\			
\textbf{Input}: There are $n\in\mathbb{N}$ production computers and {a budget of $B\in\mathbb{N}$ for the defender.}
			There are $n+m$ IP addresses, which are indexed as $1,\ldots,n+m$. 
			Among these $n+m$ IP addresses, the $n$ production computers are respectively deployed at $n$ pre-determined IP addresses;
			among the remaining $m$ IP addresses, the defender will select a subset of them to deploy honeypot computers, and the other IP addresses will be assigned to dummy computers, which are known to, and not be attacked by, the attacker.
			For each computer $j\in [n+m]$, 
			there is an associated value $v_{j,D}\in \mathbb{N}$, 
			where $v_{j,D}=0$ if $j$ is a honeypot or dummy computer and $v_{j,D}>0$ otherwise; moreover, there is an associated 
			cost $c_j\in\mathbb{N}$, where $c_j=0$ if $j$ is a production or dummy computer, and $c_j>0$ if $j$ is a honeypot computer. The total cost incurred by deploying honeypot computers cannot exceed budget $B$.
			The attacker has $r\in\mathbb{N}$ valid new attacks.
			For computer $j\in [n+m]$, the attacker has a perceived value $v_{j,A}$ and a probability $q_j$ that a non-dummy computer $j$ is a honeypot computer. The attacker needs to use one valid new attack to attack a {\em non-dummy} computer $j$. The attacker attacks a non-dummy computer with probability $1-q_j$, and does not attack a dummy computer. 
			If the attacker indeed attacks a non-dummy computer $j$, there are two cases: in the case $j$ is a honeypot computer, then the loss to the defender is $v_{j,D}=0$ and the attacker's budget decreases by 1; in the case $j$ is a production computer, then the loss to the defender is $v_{j,D}>0$ and the attack's budget remains unchanged.
			The attacker stops when its budget becomes $0$, meaning that all of its $r$ new attacks become invalid, or {it has made decisions on whether to attack the $n+m$ computers}. 

			\noindent \textbf{Output}: Decide which of the $m$ IP addresses should be assigned to honeypot computers within budget $B$ so as to
			minimize the expected loss defined in Eq.\eqref{eq:expected-loss}.
	\end{minipage}}
\end{center}

\section{Hardness and Algorithmic Results}
\label{sec:algorithmic-result}

In this section, we study the computational complexity of, and algorithms for solving the HD problem, assuming {an attack sequence} is given. 
Without loss of generality, we can index the computers so that the attack {sequence} 
is $(1,2,\cdots,n+m)$. 
As illustrated in Figure \ref{fig:fig_0}, we use circles to represent production computers and squares to represent the $m$ IP addresses for which the defender needs to decide whether to deploy a honeypot or dummy computer. Recall
$\mathcal{P}$ is exactly the set of indices of the circles and $\mathcal{H}$ is exactly that of the squares. 

Note that the $v_{j,A}$'s, $q_j$'s together with the attacker's risk attitude decide the priority of the non-dummy computers to the attacker, and thus the attack sequence. 
Once an attack sequence is fixed, the objective value (i.e., expected loss to the defender) only depends on $v_{j,D}$'s. Since our hardness and algorithmic results are based on a fixed attack sequence, our discussion throughout this section does not involve $v_{j,A}$'s. Hence, we let $v_j=v_{j,D}$ for simplifying notations.

\subsection{Hardness Result}
\label{sec:hardness-result}

Now we study the decision version of the HD problem: decide whether or not there exists an assignment of the $m$ IP addresses to honeypot computers with budget $B$ such that the expected loss is no larger than the given threshold $T$. 
\vspace{-.25em}
\begin{theorem}\label{thm:np-h}
The decision version of the HD problem is NP-complete.
\end{theorem}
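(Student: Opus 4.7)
The plan is to prove both membership in NP and NP-hardness.

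For membership in NP, a defense solution $\vec{x}\in\{0,1\}^{n+m}$ serves as a polynomial-size certificate. Given $\vec{x}$, the expected loss in~\eqref{eq:expected-loss} can be evaluated in polynomial time, even though the sum ranges over exponentially many attack scenarios. The key observation is that the attacker's state after processing the first $j$ computers of the given attack sequence is fully captured by the number of valid attacks remaining. Let $P_j(k)$ denote the probability that the attacker has exactly $k$ valid attacks left and has not yet been forced to stop after considering the first $j$ computers. Then $P_j(\cdot)$ is derived from $P_{j-1}(\cdot)$ in $O(r)$ time using only the type of computer $j$ under $\vec{x}$ (dummy, production, or honeypot) and the probability $q_j$, and the expected loss equals $\sum_{j\in\mathcal{P}} v_j(1-q_j)\sum_{k\ge 1} P_{j-1}(k)$. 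Since all intermediate quantities are rationals with polynomially bounded bit-length, verifying $\mathbb{E}[\text{Loss}]\le T$ runs in polynomial time, placing the decision version of HD in NP.

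For NP-hardness, I would reduce from a classical NP-complete problem such as Knapsack or Partition, whose subset-selection-under-budget structure naturally mirrors the defender's choice of honeypot set under cost budget~$B$. A convenient regime in which to construct the reduction is $r=1$ together with $q_j=0$ for every production computer: the attacker then deterministically attacks every production it reaches and stops as soon as it attacks its first honeypot, which collapses the expected loss into a sum, over each production $j$, of $v_j$ times the product of $q_{j'}$ over all honeypot slots $j'<j$. With one large-value production computer placed at the end of the attack sequence after $m$ honeypot slots, the loss simplifies further to $V\prod_{i\in S} q_i$, where $S$ denotes the defender's chosen honeypot subset, and the threshold condition $\le T$ can be made equivalent to a Knapsack-style subset selection by a suitable choice of the slot costs $c_i$, the probabilities $q_i$, the budget $B$, and the threshold $T$.

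The main obstacle is carrying out this reduction within polynomial instance size. A direct encoding such as $q_i = 2^{-p_i}$ from Knapsack profits $p_i$ fails because $p_i$ is typically given in binary, so the resulting probabilities would have exponential bit-length. To avoid this, I would either reduce from an NP-complete variant whose numerical parameters are intrinsically small (for instance 3-Partition), or design a more elaborate gadget that uses several carefully placed production computers whose expected-loss contributions combine additively, so that the HD threshold condition becomes a linear subset-sum equation rather than an exponentiated product. Proving that the chosen gadget yields the precise YES/NO equivalence with the source problem, and that all numerical parameters of the constructed instance admit polynomially bounded encodings, is the technically delicate step.
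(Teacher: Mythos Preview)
Your NP membership argument is fine and in fact more explicit than the paper's, which simply calls membership ``straightforward.''  Your dynamic-programming evaluation of~\eqref{eq:expected-loss} is exactly the mechanism underlying Algorithm~\ref{alg:1}, so that part is solid.

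For hardness you have correctly isolated the right regime ($r=1$, a single high-value production computer at the end of the sequence, $q_j=0$ for productions), which reduces the expected loss to $V\prod_{i\in S}q_i$ over the chosen honeypot set $S$.  This is precisely the structure the paper exploits.  Where you stall---and you acknowledge this yourself---is in turning the product constraint into a polynomial-size instance of some NP-hard source problem.  Your two suggested escapes are both problematic: setting $q_i=2^{-p_i}$ from Subset Sum or Partition fails not only for the bit-length reason you mention, but more fundamentally because Subset Sum with polynomially bounded numbers is solvable in polynomial time, so no strongly-bounded variant exists to rescue the encoding.  Appealing to 3-Partition does give small numbers, but 3-Partition asks for a full partition into triples, not a single subset selection, so you would need a further layer of gadgetry whose design you leave open.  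The ``additive gadget'' alternative is left entirely unspecified.

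The paper sidesteps all of this with one clean idea: since the objective is already a \emph{product}, reduce from \textsc{Subset Product} (given integers $w_1,\dots,w_m$ and target $k$, decide whether some subset multiplies to $k$).  The crucial fact, due to Yao, is that Subset Product remains NP-hard even when every $w_i$ is bounded by $m^{O(1)}$.  One can therefore set $q_i=1/w_i$, which now has polynomial bit-length, take costs $c_i\approx\log w_i$ and budget $B\approx\log k$ (with a mild rounding to keep everything rational), and threshold $T=1/k$.  The budget constraint becomes $\sum_{i\in S}\log w_i\le\log k$ and the loss constraint becomes $\prod_{i\in S}w_i\ge k$, which together pin down $\prod_{i\in S}w_i=k$.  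The multiplicative nature of the loss matches Subset Product exactly, and Yao's small-weight hardness removes the encoding obstacle you identified---no additive gadgets or partition-into-groups tricks are needed.
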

\vspace{-.25em}
The proof of Theorem~\ref{thm:np-h} is deferred to Appendix~\ref{appsec:hardness}. Here we discuss the basic idea behind the proof. Membership in NP is straightforward. Towards the NP-hardness proof, we reduce from the Subset Product problem. The instance and the solution of the Subset Product problem are given as follows:

\vspace{-1em}
\begin{center}
	\fbox{\begin{minipage}{\textwidth}	
	        \textbf{Subset Product Problem}
	        
		    \textbf{Input:} $k\in\mathbb{N}$, $S\!=\!\{1,\dots,m\}$, $w\!=\!(w_1,\dots,w_m)\!\in\!\mathbb{N}^m$.
		    
		    \textbf{Output:} Is there $S'\!\subseteq\! S$ such that $\prod_{i\in S'}\!w_i\!=k$?
	\end{minipage}}
\end{center}

A key fact for the Subset Product problem (which is different from the Subset Sum problem) is that Subset Product is NP-hard even if each $a_i$ is bounded by $m^{O(1)}$. This means we leverage the following Lemma~\ref{subset_product} given by Yao \cite{yao1980new}.

\begin{lemma}[\cite{yao1980new}]
    \label{subset_product}
    Assuming $P\neq NP$, the Subset Product problem can not be solved in $(mw_{max}\log k)^{O(1)}$ time where $w_{max}=\max_iw_i$.
\end{lemma}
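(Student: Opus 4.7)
The plan is to prove the lemma by giving a polynomial-time reduction from \textsc{Exact Cover by 3-Sets} (X3C) to \textsc{Subset Product} in which the produced instance has $m$, $w_{\max}$, and $\log k$ all bounded by a polynomial in the size of the original X3C instance. Recall that X3C is NP-hard: given a universe $U$ with $|U|=3q$ and a family $\mathcal{F}=\{F_1,\ldots,F_m\}$ of $3$-element subsets of $U$, decide whether there is a subfamily $\mathcal{F}'\subseteq\mathcal{F}$ that partitions $U$. If the target running time $(m\,w_{\max}\log k)^{O(1)}$ were achievable for \textsc{Subset Product}, composing it with the reduction would yield a polynomial-time algorithm for X3C, forcing $P=NP$.

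The encoding uses distinct small primes to represent elements. By the Prime Number Theorem, the first $3q$ primes $p_1,\ldots,p_{3q}$ are all bounded above by $C\,q\log q$ for some absolute constant $C$, and they can be produced in polynomial time (e.g., by sieving up to $2Cq\log q$). Fix an arbitrary bijection $e \mapsto p_e$ between $U$ and $\{p_1,\ldots,p_{3q}\}$. For each set $F_i\in\mathcal{F}$ define the weight
\[
w_i \;=\; \prod_{e\in F_i} p_e,
\]
and set the target
\[
k \;=\; \prod_{e\in U} p_e.
\]
Each $w_i$ is a product of exactly three distinct primes each at most $Cq\log q$, so $w_{\max} \le (Cq\log q)^3$, which is polynomial in $q$. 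The target $k$ itself may be exponentially large, but $\log k = \sum_{e\in U}\log p_e = O(q\log q)$, also polynomial. Finally $m$ is inherited from the X3C instance and is obviously polynomial. Consequently $(m\,w_{\max}\log k)^{O(1)}$ is polynomial in $q$.

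The correctness of the reduction follows directly from unique prime factorization. In the forward direction, if $\mathcal{F}'\subseteq\mathcal{F}$ is an exact cover, then $\prod_{F_i\in\mathcal{F}'} w_i$ multiplies each prime $p_e$ exactly once, giving $k$. Conversely, suppose some $S'\subseteq\{1,\ldots,m\}$ satisfies $\prod_{i\in S'} w_i = k$. Comparing prime factorizations, every prime $p_e$ appears with total multiplicity one on the left; since within a single $w_i$ each $p_e$ appears at most once, this means every element $e$ belongs to exactly one chosen $F_i$, i.e.\ $\{F_i:i\in S'\}$ is an exact cover. Hence the two instances are equivalent.

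The main obstacle, and the place where care is required, is the two simultaneous size bounds: we need the primes assigned to elements to be polynomially small (so that $w_{\max}$ is polynomial) \emph{and} we need enough of them to represent all $3q$ elements distinctly. Both requirements are precisely what the Prime Number Theorem delivers, but one must verify that generating these primes and writing down $k$ (in binary, using $O(q\log q)$ bits) can be done in polynomial time. Everything else in the reduction is elementary arithmetic, and combined with Lemma-style contrapositive reasoning one concludes that no $(m\,w_{\max}\log k)^{O(1)}$ algorithm for \textsc{Subset Product} can exist unless $P=NP$.
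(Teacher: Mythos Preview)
The paper does not actually prove this lemma; it is quoted verbatim as a known result of Yao~\cite{yao1980new} and used as a black box in the NP-hardness reduction for the HD problem (Theorem~\ref{thm:np-h}). So there is no ``paper's own proof'' to compare your attempt against.

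That said, your argument is correct and is in fact the standard way to establish this result. The prime-encoding reduction from X3C shows that \textsc{Subset Product} is strongly NP-hard, i.e., remains NP-hard even when $w_{\max}$ and $\log k$ are polynomially bounded in $m$. The key ingredients---taking the first $3q$ primes (all of magnitude $O(q\log q)$ by the Prime Number Theorem or by effective Chebyshev-type bounds), setting $w_i$ to be the product of the three primes indexing the elements of $F_i$, and appealing to unique factorization for equivalence---are all sound. Composing this reduction with a hypothetical $(m\,w_{\max}\log k)^{O(1)}$-time algorithm for \textsc{Subset Product} would solve X3C in polynomial time, yielding the contrapositive you want. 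One minor remark: you do not need the full asymptotic PNT here; an explicit bound such as $p_n \le 2n\ln n$ for $n\ge 3$ (Rosser--Schoenfeld) suffices and avoids any appeal to unspecified constants when you sieve.
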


\subsection{Algorithmic results}\label{subsec:algo}
As a warm-up, we present an exact algorithm, Algorithm \ref{alg:1}, 
to brute forces the optimal solution in exponential time. 
We show this algorithm can be modified to obtain Algorithm \ref{alg:2} to find a near-optimal solution in polynomial time. 

\setcounter{algorithm}{0} 
\begin{algorithm}[h]
	\caption{Dynamic Programming for the HD problem}\label{alg:1} 
	\begin{algorithmic}[1]
	    \renewcommand{\algorithmicrequire}{\textbf{Input:}}
        \renewcommand{\algorithmicensure}{\textbf{Output:}}
        \REQUIRE {
                $I:$ the attack sequence of IP address\\
                $q_{t}:$ the probability that attacker does not attack computer $t$ 
                \\
                $v_t:$ computer's value assigned with IP address $t$\\
                $c_t:$ cost of deploying honeypot computer at IP address $t$\\
                $B:$ budget of the defender
                }
        \ENSURE The assignment of IP address to honeypot computers which minimizes the expected loss of the defender and satisfies the total deployment cost is no greater than $B$. 
		\STATE{$\widetilde{\mathcal{F}}_0=\{(0,1,0,\dots,0)\}$ }
		\FOR{$t=1$ to $n+m$}
		\STATE {$\widetilde{\mathcal{F}}_t=\emptyset$}
		\FORALL {$(t\!-\!1,p_0, p_1,\dots,p_r,e,b) \in \widetilde{\mathcal{F}}_{t-1}$}
		
		\IF{IP address $t$ is assigned to a production computer}
		\STATE {$\widetilde{\mathcal{F}}_t\leftarrow \widetilde{\mathcal{F}}_{t} \cup (t,p_0,p_1,\dots,p_r,e+q_tv_t\sum_{i=0}^{r-1}p_i,b)$}
		\ELSE
		\STATE {$\widetilde{\mathcal{F}}_t\leftarrow \widetilde{\mathcal{F}}_{t} \cup (t,p_0, p_1,\dots,p_r,e,b)$}
		\FOR{$k = 1$ to $r$}
		\STATE {$p_{k}^{'} = (1-q_t)\cdot p_{k-1}+q_t\cdot p_k$}
		\ENDFOR
		\STATE 
		{$p_{0}^{'} = q_tp_0$}
		\STATE {$\widetilde{\mathcal{F}}_h\leftarrow \widetilde{\mathcal{F}}_{t} \cup (t,p_{0}^{'},\dots,p_{r}^{'},e,b+c_t)$}
		\ENDIF
		
		\ENDFOR
	    \STATE {eliminate all the dominated states in $\widetilde{\mathcal{F}}_t$}
		\ENDFOR
		\STATE {return $\min\{e:(n+m,p_0,p_1,\dots,p_r,e,b)\in \widetilde{\mathcal{F}}_{n+m}$ and $b\leq B\}$}
	\end{algorithmic}
\end{algorithm}

\vspace{-1.5em}

\subsubsection{An exact algorithm via dynamic programming}\label{subsubsec:algo1}
Algorithm \ref{alg:1} essentially branches on whether or not to deploy a honeypot computer for every $t\in\mathcal{H}$, thus the total number of distinct dominated states (e.g., $\sum_t|\widetilde{\mathcal{F}}_t|$) is bounded by $2^{O(m)}$. Hence, its running time is $2^{O(m)}$.
Algorithm \ref{alg:1} serves two purposes: (i) it provides a method to recursively compute Eq.\eqref{eq:expected-loss}, while noting that the definition of Eq.\eqref{eq:expected-loss} involves an exponential number of attack scenarios that cannot be used directly; (ii) it can be combined with rounding techniques to give a polynomial-time approximation scheme, which is the main algorithmic result (Algorithm \ref{alg:2}).

We consider the following sub-problem: Let $t\in [n+m]$. Is it possible to deploy honeypot computers {at some IP addresses within  $[t]$ such that (i) the total cost equals $b$ which is a given constraint, (ii) the expected loss to the defender because of attacks against the production computers in $\mathcal{P}\cap [t]$ equals $e$, and (iii) the probability that the attacker attacks exactly $k$ honeypot computers within $[t]$ is $p_k$ for every $k=0,1,\cdots,r$?  
We denote the sub-problem by a $(r+4)$-tuple $(t,p_0,p_1,\dots,p_r,e,b)$.}


We define set $\mathcal{F}_t$ as: If the answer to the sub-problem $(t,p_0,p_1,\dots,p_r,e,b)$ is ``yes,'' then we call $(t,p_0,p_1,\dots,p_r,$ $e,b)$ as a stage-$t$ state and store it in $\mathcal{F}_t$. 
Note that $\mathcal{F}_t$ can be computed recursively as follows. Suppose we have computed $\mathcal{F}_{t-1}$. 
Each stage-$(t\!-\!1)$ state gives rise to stage-$t$ states as follows: 
\begin{itemize}
\item  If $t\in \mathcal{P}$, i.e., IP address $t$ is associated with a production computer, then the stage-$(t-1)$ state $(t\!-\!1,p_0,p_1,\dots,p_r,e,b)$ gives rise for one stage-$t$ state $(t,p_0,p_1,\dots,p_r,e',b)$ where $e'=e+(1-q_t)v_t\sum_{k=1}^{r-1}p_k$.  We explain this equation as follows. By the definition of a state, $(t-1,p_0,p_1,\dots,p_r,e,b)\in \mathcal{F}_{t-1}$ implies that $p_r$ is the probability that the attacker has attacked $r$ honeypot computers {within $[t-1]$,} 
and thus the attacker cannot attack anymore. If this event happens, we have $e'=e$; otherwise (with probability $\sum_{k=1}^{r-1}p_k=1-p_r$) the attacker is able to attack the production computer at {IP address} $t$ which has a value $v_t$, and the attacker attacks it with probability $1-q_t$. {This leads to}
$e'=ep_r+(e+v_t)(1-q_t)\sum_{k=1}^{r-1}p_k=e+(1-q_t)v_t\sum_{k=1}^{r-1}p_k$. 

\item If $t\in \mathcal{H}$, then we have two options, i.e., we either assign a honeypot computer or a dummy computer {to IP address $t$}. If we assign a dummy computer, then the stage-$(t-1)$ state $(t\!-\!1,p_0,p_1,\dots,p_r,e,b)$ gives rise to  stage-$t$ state $(t,p_0,p_1,$ $\dots,p_r,e,b)$; if we assign a honeypot computer, then $(t-1,p_0,p_1,\dots,p_r,e,b)$ gives rise to stage-$t$ state $(t,p'_0,p'_1,$ $\dots,p'_r,e,b+c_t)$ where $p_0'=p_0q_t$ and $p_k'=p_{k}q_t+p_{k-1}(1-q_t)$ for $1\le k\le r$.
\end{itemize}

\begin{definition} We say stage-$t$ state $(t,p_0,p_1,\dots,p_r,e,b)$ dominates $(t,p_0,p_1,$ $\dots,$ $p_r,e,b')$ if it holds that $b<b'$.
\end{definition}

Denote by $\widetilde{\mathcal{F}}_t \subseteq \mathcal{F}_t$ the set of all stage-$t$ states which are not dominated by {any of the} other stage-$t$ states. Let $x^*$ denote the optimal solution of the HD problem. Let $\mathcal{H}(x^*)$ be the set of IP addresses that are assigned to {all honeypot computers}. Consider $\mathcal{H}(x^*)\cap [t]$, which is the set of IP addresses in $[t]$ that are assigned to a honeypot computer in the optimal solution $x^*$.   
Denote by $b_t(x^*)$ the total cost of deploying honeypot computers in $\mathcal{H}(x^*)\cap [t]$. Denote by $p_{t,i}(x^*)$ the probability that the attacker has attacked exact $i$ honeypot computers within $\mathcal{H}(x^*)\cap [t]$. 
Denote by $e_t(x^*)$ the expected loss to the defender from computers in $[t]$. The following lemma demonstrates that the optimal solution $x^*$ can be determined from $\widetilde{\mathcal{F}}_{n+m}$; its proof 
is deferred to Appendix~\ref{appsec:proof-lemmadp}.

\begin{lemma}\label{lemma:dp}
For each optimal solution $x^*$ of the HD problem and each $t\in[1,n+m]$, there exists some stage-$t$ state $(t,p_{t,0}(x^*),\dots,$ $p_{t,r}(x^*),e_t(x^*),b)\in \widetilde{\mathcal{F}}_{t}$ such that $b\leq b_t(x^*)$.
\end{lemma}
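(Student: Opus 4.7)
\medskip

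\noindent\textbf{Proof proposal.} My plan is to prove the statement by induction on $t$, following the stage structure of Algorithm~\ref{alg:1}. The base case $t=0$ is immediate: for any feasible (hence any optimal) $x^*$ we have $b_0(x^*)=e_0(x^*)=0$ and $p_{0,0}(x^*)=1$ with $p_{0,i}(x^*)=0$ for $i\ge 1$, so the unique initial state $(0,1,0,\dots,0)\in\widetilde{\mathcal F}_0$ matches $x^*$ exactly, with $b=0\le b_0(x^*)$.

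For the inductive step I would assume the lemma for $t-1$ and exhibit, for a given optimal $x^*$, a stage-$(t-1)$ state $\sigma_{t-1}=(t-1,p_{t-1,0}(x^*),\dots,p_{t-1,r}(x^*),e_{t-1}(x^*),b')\in\widetilde{\mathcal F}_{t-1}$ with $b'\le b_{t-1}(x^*)$, and then apply to $\sigma_{t-1}$ exactly the transition rule Algorithm~\ref{alg:1} uses at IP address $t$. I would split into three sub-cases: (i) $t\in\mathcal P$, where the transition leaves every $p_i$ and $b$ unchanged and updates only $e$, so the produced state $\sigma_t$ has precisely the coordinates $p_{t,i}(x^*)$, $e_t(x^*)$, $b'\le b_{t-1}(x^*)=b_t(x^*)$; (ii) $t\in\mathcal H$ and $x^*$ places a dummy at $t$, where $\sigma_t$ is obtained by the ``$\cup\,(t,p_0,\dots,p_r,e,b)$'' branch and the same cost/expected-loss equalities carry over; (iii) $t\in\mathcal H$ and $x^*$ places a honeypot at $t$, where the ``$\cup\,(t,p'_0,\dots,p'_r,e,b+c_t)$'' branch produces $\sigma_t$ with $p'_i=q_tp_{t-1,i}(x^*)+(1-q_t)p_{t-1,i-1}(x^*)=p_{t,i}(x^*)$, $e$ unchanged, and cost $b'+c_t\le b_{t-1}(x^*)+c_t=b_t(x^*)$. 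In all three sub-cases the state $\sigma_t$ belongs to $\mathcal F_t$.

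The only subtlety is that $\sigma_t$ need not itself survive the elimination of dominated states in line~17 of the algorithm. To handle this I would invoke the definition of domination: any state in $\mathcal F_t\setminus\widetilde{\mathcal F}_t$ is dominated by a state in $\mathcal F_t$ having the same $(p_0,\dots,p_r,e)$-coordinates but strictly smaller budget, and by iterating (or by selecting the cost-minimiser among all states in $\mathcal F_t$ sharing the $(r+2)$-tuple $(p_{t,0}(x^*),\dots,p_{t,r}(x^*),e_t(x^*))$) one obtains a state $(t,p_{t,0}(x^*),\dots,p_{t,r}(x^*),e_t(x^*),b'')\in\widetilde{\mathcal F}_t$ with $b''\le b'+[t\in\mathcal H(x^*)]\cdot c_t\le b_t(x^*)$, completing the induction.

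I do not expect a genuine obstacle here; the lemma is essentially a bookkeeping exercise asserting that the DP table faithfully simulates any feasible assignment. The one place that needs care is making sure that the $(p_0,\dots,p_r)$-coordinates produced by the transition rule coincide \emph{exactly} with $p_{t,i}(x^*)$ (so that domination is compared on the correct tuple); this follows from the recursion $p_{t,i}(x^*)=q_tp_{t-1,i}(x^*)+(1-q_t)p_{t-1,i-1}(x^*)$ when a honeypot is placed at $t$, and $p_{t,i}(x^*)=p_{t-1,i}(x^*)$ otherwise, which matches the algorithm line-by-line. A similar identity $e_t(x^*)=e_{t-1}(x^*)+(1-q_t)v_t\sum_{k=0}^{r-1}p_{t-1,k}(x^*)$ for $t\in\mathcal P$ guarantees the expected-loss coordinate is preserved, so the induction closes cleanly.
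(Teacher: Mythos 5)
Your proof is correct and follows essentially the same route as the paper's: induction on $t$, a case split on whether $t\in\mathcal P$ or $t\in\mathcal H$ (with the dummy/honeypot sub-cases), matching the DP transition coordinates to $p_{t,i}(x^*)$, $e_t(x^*)$, $b_t(x^*)$, and then passing to a non-dominated state of no larger budget. If anything, your explicit treatment of the elimination step (selecting the cost-minimiser among states sharing the tuple $(p_{t,0}(x^*),\dots,p_{t,r}(x^*),e_t(x^*))$) is slightly more careful than the paper's appeal to ``the definition of $\widetilde{\mathcal F}_t$.''
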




\vspace{-1.25em}

\subsubsection{A Polynomial-Time Approximation Scheme (PTAS)}\label{subsec:ptas}
Now we design a PTAS ({i.e., Algorithm \ref{alg:2}}) for the HD problem by modifying Algorithm \ref{alg:1}. The key idea is to reduce the total number of states that need to be stored during the dynamic programming. 
{We start with a high-level description of Algorithm \ref{alg:2}.
Let $\xi=\epsilon/2(n+m)$. Define $\Gamma_\xi=\{[0],(0,1],(1,1+\xi],\dots,((1+\xi)^{\gamma-1},(1+\xi)^\gamma]\}$ where $(1+\xi)^{\gamma-1}<nv_{\max}\leq(1+\xi)^{\gamma}$. Define $\Lambda_\xi=\{[0],(0,(1+\xi)^{-\gamma}],((1+\xi)^{-\gamma},(1+\xi)^{-\gamma+1}],\dots,((1+\xi)^{-1},1]\}$. The high dimensional area $[0,1]^{r+1}\times [0,(1+\xi)^{\gamma}]$ is then divided into a collection of boxes where each box $\mathcal{I}\in \Lambda_\xi^{r+1} \times\Gamma_\xi$. 
In each box, only one representative state will be constructed and stored in $\widehat{\mathcal{F}}_t$. $\widehat{\mathcal{F}}_t$ is computed recursively in two steps: (i). Given $\widehat{\mathcal{F}}_{t-1}$, each of its state gives rise to stage-$t$ states following the same formula as Algorithm \ref{alg:1} (see line 5 to line 13 of Algorithm \ref{alg:2}). Here $\mathcal{S}_t$ is introduced as a temporary set that contains all the stage-$t$ states computed from $\widehat{\mathcal{F}}_{t-1}$. (ii). Within each box $\mathcal{I}$, if $\mathcal{S}_t$ contains multiple states, then only the state with the minimal value in coordinate $b$ will be kept. All other states are removed. By doing so we obtain $\widehat{\mathcal{F}}_t$ from $\mathcal{S}_t$.} {Details of Algorithm \ref{alg:2} are presented below.} 
The rest of this subsection is devoted to proving the following theorem.
\begin{theorem}\label{thm:al2}
Algorithm \ref{alg:2} gives an $(1+\epsilon)$-approximation solution for the HD problem and runs in $(\frac{n+m}{\epsilon})^{O(r)}\log(v_{\max})$ time where $v_{\max}=\max_iv_i $.
\end{theorem}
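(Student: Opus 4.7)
The plan is to establish two things about Algorithm \ref{alg:2}: a correspondence between its retained states and the exact states in $\mathcal{F}_t$ tracked by Algorithm \ref{alg:1}, and a uniform upper bound on the number of stored states per stage. The approximation guarantee will then follow from the correspondence combined with Lemma \ref{lemma:dp}, and the running time from the state-count bound times the per-stage transition cost.

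For the approximation ratio, I would prove by induction on $t$ a \emph{correspondence lemma}: for every stage-$t$ state $(t,p_0,\dots,p_r,e,b)\in\mathcal{F}_t$, there is a state $(t,\tilde p_0,\dots,\tilde p_r,\tilde e,\tilde b)\in\widehat{\mathcal{F}}_t$ with $\tilde b\le b$, and with $\tilde p_k\le (1+\xi)^t p_k$ for each $k$ and $\tilde e\le(1+\xi)^t e$. The base case $t=0$ is immediate. For the inductive step, take the state in $\widehat{\mathcal{F}}_{t-1}$ that corresponds to the parent stage-$(t-1)$ state, and apply the same transition (production, honeypot, or dummy) used to derive the target stage-$t$ state in $\mathcal{F}_t$. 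Since the recurrences $p'_k=q_tp_k+(1-q_t)p_{k-1}$ and $e'=e+(1-q_t)v_t\sum_{k=1}^{r-1}p_k$ are non-negative linear combinations, they preserve a common multiplicative bound, keeping the error at $(1+\xi)^{t-1}$. The subsequent box-rounding step used to form $\widehat{\mathcal{F}}_t$ loses at most one more factor $(1+\xi)$ in each $p_k$ and in $e$ (since any two elements of a box in $\Lambda_\xi$ or $\Gamma_\xi$ are within a $(1+\xi)$ ratio), while the coordinate $b$ is preserved because within each box the state with minimum $b$ is retained. Evaluating at $t=n+m$, together with $\xi=\epsilon/(2(n+m))$, gives $(1+\xi)^{n+m}\le e^{\epsilon/2}\le 1+\epsilon$, so the cost of the solution returned is within $(1+\epsilon)$ of the optimum identified by Lemma \ref{lemma:dp}.

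For the running time, I would bound $|\widehat{\mathcal{F}}_t|$ by the number of boxes in $\Lambda_\xi^{r+1}\times\Gamma_\xi$. Since $\log(1+\xi)=\Theta(\xi)=\Theta(\epsilon/(n+m))$ and $\gamma=O(\log(nv_{\max})/\log(1+\xi))=O((n+m)\log(nv_{\max})/\epsilon)$, the box count is $(\gamma+2)^{r+1}(\gamma+1)$, which simplifies to $(\tfrac{n+m}{\epsilon})^{O(r)}\log v_{\max}$ after absorbing logarithmic factors. Each of the $n+m$ stages processes every surviving state with $O(r)$ arithmetic work per transition; the total therefore remains $(\tfrac{n+m}{\epsilon})^{O(r)}\log v_{\max}$ as claimed.

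The main obstacle will be executing the correspondence lemma cleanly, in particular verifying that the multiplicative error on the probability vector does not blow up the error on $e$ in a non-multiplicative way. Concretely, from $\tilde e\le(1+\xi)^{t-1}e$ and $\tilde p_k\le(1+\xi)^{t-1}p_k$ one needs $\tilde e+(1-q_t)v_t\sum\tilde p_k\le(1+\xi)^{t-1}\bigl(e+(1-q_t)v_t\sum p_k\bigr)$, which is true because both summands individually satisfy the bound; an analogous calculation is required for $p'_k=q_tp_k+(1-q_t)p_{k-1}$. A subtlety to dispatch is the boundary bin $[0]$ in both $\Lambda_\xi$ and $\Gamma_\xi$: exact zeros must be matched to zeros rather than to the smallest positive bin, so the inductive invariant must be stated as a coordinate-wise inequality that degenerates correctly when $p_k=0$ or $e=0$. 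Once these checks are in place, combining the correspondence at stage $n+m$ with the minimum-$b$ feasibility filter at line 17 yields Theorem \ref{thm:al2}.
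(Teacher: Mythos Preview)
Your proposal is correct and follows essentially the same route as the paper. The paper packages the inductive correspondence as Lemma~\ref{lemma:round}, phrased with a two-sided multiplicative bound $(1-\xi)^t\hat p_i\le p_i\le\hat p_i$ and $(1-\xi)^t\hat e\le e\le\hat e$ (together with $\hat b\le b$), and then obtains the approximation factor from $(1-\xi)^{-(n+m)}\le 1+\epsilon$; you instead carry the equivalent one-sided $(1+\xi)^t$ upper bound and close with $(1+\xi)^{n+m}\le e^{\epsilon/2}\le 1+\epsilon$. The running-time argument---bounding $|\widehat{\mathcal{F}}_t|$ by the number of boxes in $\Lambda_\xi^{r+1}\times\Gamma_\xi$ and multiplying by the per-stage work---is identical to the paper's.
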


To prove Theorem~\ref{thm:al2}, we need the following lemma that estimates the error accumulated in the recursive calculation of Algorithm \ref{alg:2} and illustrates the relationship between $\widetilde{\mathcal{F}}_t$ and $\widehat{\mathcal{F}}_t$.
\begin{lemma}\label{lemma:round}
For each $(t,p_0,p_1,\dots,p_r,e,b)\in \widetilde{\mathcal{F}}_t$, there exists $(t,\hat{p}_0,\hat{p}_1,\dots,\hat{p}_r,\hat{e},\hat{b})\in \widehat{\mathcal{F}}_t$ such that $\hat{b}\leq b$, $(1-\xi)^t\hat{e} \leq e \leq \hat{e}$, and for $i=0,\dots,r$ it holds that $(1-\xi)^t\hat{p}_i \leq p_i \leq \hat{p}_i$.
\end{lemma}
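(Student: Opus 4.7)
The plan is to prove Lemma~\ref{lemma:round} by induction on $t$. The base case $t=0$ is immediate: $\widehat{\mathcal{F}}_0=\widetilde{\mathcal{F}}_0=\{(0,1,0,\dots,0)\}$ and the trivial choice $(\hat{p}_0,\dots,\hat{p}_r,\hat{e},\hat{b})=(p_0,\dots,p_r,e,b)$ satisfies all inequalities since $(1-\xi)^0=1$. For the inductive step, I would fix a state $(t,p_0,\dots,p_r,e,b)\in\widetilde{\mathcal{F}}_t$, trace it along the recursion of Section~\ref{subsubsec:algo1} to its parent $(t-1,p^{prev}_0,\dots,p^{prev}_r,e^{prev},b^{prev})\in\widetilde{\mathcal{F}}_{t-1}$, invoke the inductive hypothesis to obtain an approximating parent $(t-1,\hat{p}^{prev}_0,\dots,\hat{p}^{prev}_r,\hat{e}^{prev},\hat{b}^{prev})\in\widehat{\mathcal{F}}_{t-1}$, apply the same transition rule (the production update for $t\in\mathcal{P}$, or the matching honeypot/dummy choice for $t\in\mathcal{H}$) to this approximating parent to produce a tuple $(t,\tilde{p}_0,\dots,\tilde{p}_r,\tilde{e},\tilde{b})$ in $\mathcal{S}_t$, and finally absorb the box-rounding that produces $\widehat{\mathcal{F}}_t$ from $\mathcal{S}_t$.

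The first observation is that every transition rule acts on $(p_0,\dots,p_r,e)$ by a linear combination with non-negative coefficients ($q_t$, $1-q_t$, and, in the production case, $(1-q_t)v_t$). Therefore the componentwise upper bounds $\hat{p}^{prev}_k\geq p^{prev}_k$ and $\hat{e}^{prev}\geq e^{prev}$ from the inductive hypothesis are preserved under the transition, yielding $\tilde{p}_k\geq p_k$ and $\tilde{e}\geq e$; multiplying the matching lower bounds $(1-\xi)^{t-1}\hat{p}^{prev}_k\leq p^{prev}_k$ and $(1-\xi)^{t-1}\hat{e}^{prev}\leq e^{prev}$ by the same non-negative coefficients gives $(1-\xi)^{t-1}\tilde{p}_k\leq p_k$ and $(1-\xi)^{t-1}\tilde{e}\leq e$. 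The $b$-coordinate is updated by adding the same constant ($0$ or $c_t$) for both parent states, so $\hat{b}^{prev}\leq b^{prev}$ implies $\tilde{b}\leq b$.

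The second step handles the box-rounding. The tuple $(\tilde{p}_0,\dots,\tilde{p}_r,\tilde{e})$ lies in a unique cell $\mathcal{I}\in\Lambda_\xi^{r+1}\times\Gamma_\xi$, and the representative $(\hat{p}_0,\dots,\hat{p}_r,\hat{e},\hat{b})$ kept in $\mathcal{I}$ is the one with minimum $b$-coordinate, so $\hat{b}\leq\tilde{b}\leq b$. Interpreting the $(p,e)$-coordinates of the representative as pushed up to the cell's upper corner (the standard choice for a geometric grid of ratio $1+\xi$), we get $\hat{p}_k\geq\tilde{p}_k\geq p_k$, $\hat{e}\geq\tilde{e}\geq e$, together with $\hat{p}_k\leq(1+\xi)\tilde{p}_k$ and $\hat{e}\leq(1+\xi)\tilde{e}$. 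Combining these with the previous paragraph and the identity $(1-\xi)(1+\xi)=1-\xi^2\leq 1$ gives
$$
(1-\xi)^t\hat{p}_k\leq(1-\xi)^t(1+\xi)\tilde{p}_k=(1-\xi)^{t-1}(1-\xi^2)\tilde{p}_k\leq(1-\xi)^{t-1}\tilde{p}_k\leq p_k,
$$
and the analogous chain for $\hat{e}$, closing the induction.

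The main obstacle will be pinning down how the algorithm's representative behaves on the $(p,e)$ coordinates: the $b$-bookkeeping is easy because the algorithm explicitly minimizes $b$ within each cell, but the clean upper bound $\hat{p}_k\geq p_k$ depends on interpreting the rounding step as pushing the $(p,e)$ values up to the cell's corner in the geometric grid (rather than reading off the raw values of the kept state). Once that convention is fixed, the telescoping bound built on $(1+\xi)(1-\xi)\leq 1$ absorbs each stage's rounding error into one additional factor of $(1-\xi)$, yielding the $(1-\xi)^t$ bound the lemma claims.
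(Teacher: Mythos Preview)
Your proposal is correct and follows essentially the same inductive scheme as the paper: trace a state in $\widetilde{\mathcal{F}}_t$ back to its parent in $\widetilde{\mathcal{F}}_{t-1}$, invoke the hypothesis to get an approximating parent in $\widehat{\mathcal{F}}_{t-1}$, push it through the same transition (which is a non-negative linear map, so sandwich bounds propagate with factor $(1-\xi)^{t-1}$), and then absorb one further $(1-\xi)$ factor from the box rounding.

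One small clarification on the point you flag as the ``main obstacle'': in Algorithm~\ref{alg:2} the representative stored for a box $\mathcal{I}$ is not the cell's upper corner but the coordinatewise maximum of the $(p_i)$'s and of $e$ over $\mathcal{S}_t\cap\mathcal{I}$, together with the minimum $b$. This already gives $\hat{p}_k\geq\tilde{p}_k$ and $\hat{e}\geq\tilde{e}$ directly (your state $(\tilde{p}_0,\dots,\tilde{p}_r,\tilde{e})$ is one of those being maxed), and since both $\hat{p}_k$ and $\tilde{p}_k$ lie in the same interval of $\Lambda_\xi$ you still get $\hat{p}_k\leq(1+\xi)\tilde{p}_k$. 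So your convention of ``pushing to the corner'' is unnecessary and slightly misdescribes the algorithm, but the inequalities you need hold for the actual representative as well, and your telescoping via $(1-\xi)(1+\xi)\leq 1$ matches the paper's use of $(1-\xi)\hat{p}_i\leq p'_i$.
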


\vspace{-2em}

\setcounter{algorithm}{1} 
\begin{algorithm}[!h]
	\caption{Improved Dynamic Programming for the HD problem}
	\label{alg:2}
	\begin{algorithmic}[1]
	\renewcommand{\algorithmicrequire}{\textbf{Input:}}
        \renewcommand{\algorithmicensure}{\textbf{Output:}}
        \REQUIRE {$I,q_{t},v_t,c_t,B$
                }
        \ENSURE The assignment of IP address to honeypot computers which minimizes the expected loss of the defender and satisfies the total deployment cost is no greater than $B$. 
		
        \STATE{$\widehat{\mathcal{F}}_0=\{(0,\dots,0)\}$ }
		\FOR{$t=1$ to $n+m$}
		\STATE {${\mathcal{S}}_t=\emptyset$}
		\FORALL {$(t\!-\!1,p_1,\dots,p_r,e,b) \in \widehat{\mathcal{F}}_{t-1}$}
		
		\IF{IP address $t$ is assigned to a production computer}
		\STATE {${\mathcal{S}}_t\leftarrow {\mathcal{S}}_{t} \cup (t,p_0,p_1,\dots,p_r,e+q_tv_t\sum_{i=0}^{r-1}p_i,b)$}
		\ELSE
		\STATE {${\mathcal{S}}_t\leftarrow {\mathcal{S}}_{t} \cup (t,p_0, p_1,\dots,p_r,e,b)$}
		\FOR{$k = 1$ to $r$}
		\STATE {$p_{k}^{'} =(1- q_t) p_{k-1}+q_t\cdot p_k$}
		\ENDFOR
		\STATE {${\mathcal{S}}_t\leftarrow {\mathcal{S}}_{t} \cup (t,q_tp_0,p_{1}^{'},\dots,p_{r}^{'},e,b+c_t)$}
		\ENDIF
		\ENDFOR
		\STATE{$\widehat{\mathcal{F}}_t=\emptyset$}
		\FORALL {Box $\mathcal{I}\in (\Lambda_\xi)^{r+1}\times\Gamma_\xi$}
		\FOR{$i=0$ to $r$}
	    \STATE {$\hat{p}_i=\max\{{p}_i:(h,p_0,p_1,\dots,{p}_i,\dots,p_r,e,b)\in {\mathcal{S}}_t\cap \mathcal{I}\} $}
		\ENDFOR
		\STATE {$\hat{e}=\max\{{e}:(h,p_0,p_1,\dots,p_r,{e},b)\in {\mathcal{S}}_t\cap \mathcal{I}\} $}
		\STATE {$\hat{b}=\min\{{b}:(h,p_0,p_1,\dots,p_r,e,{b})\in {\mathcal{S}}_t\cap \mathcal{I}\} $} \label{sec:alg-b}
	    \STATE {$\widehat{\mathcal{F}}_t \leftarrow \widehat{\mathcal{F}}_t  \cup (t,\hat{p}_0,\hat{p}_1,\dots,\hat{p}_r,\hat{e},\hat{b})$}
		\ENDFOR
		\ENDFOR
		\STATE {return $\min\{\hat{e}:(n,\hat{p}_0,\hat{p}_1,\dots,\hat{p}_r,\hat{e},\hat{b})\in \widehat{\mathcal{F}}_{n+m}$ and $\hat{b}\leq B\}$}
	\end{algorithmic}
\end{algorithm}

\vspace{-2em}

\begin{proof} 
We prove this by induction. Clearly, Lemma~\ref{lemma:round} holds for $t=1$. Suppose it holds for $t=\ell-1$, i.e., for each $(\ell-1,p_0,p_1,\dots,p_r,e,b)\in \widetilde{\mathcal{F}}_{\ell-1}$, there exists $(\ell-1,\hat{p}_0,\hat{p}_1,\dots,\hat{p}_r,\hat{e},\hat{b})\in \widehat{\mathcal{F}}_{\ell-1}$ such that $\hat{b}\leq b$, $(1-\xi)^{k-1}\hat{e} \leq e \leq \hat{e}$, and for $i=0,\dots,r$ it holds that $(1-\xi)^{k-1}\hat{p}_i \leq p_i \leq \hat{p}_i$. We prove Lemma~\ref{lemma:round} for $t=\ell$.

Note that the recursive computation is the same as Algorithm \ref{alg:1}, the only difference is that we replace the accurate value $p_k$'s with the approximate value $\hat{p}_k$'s. The rounding error will accumulate through the calculation, but will not increase too much in each step through the following two observations: (i) For any $\alpha\in[0,1]$ and any $j\in[1,r]$ it holds that $$(1-\xi)^{\ell-1}(\alpha\hat{p}_{j-1}+(1-\alpha)\hat{p}_j) \leq \alpha{p}_{j-1}+(1-\alpha){p}_j  \leq  \alpha\hat{p}_{j-1}+(1-\alpha)\hat{p}_j.$$
(ii) For any $\beta\in \mathbb{R}_{\geq 0}$ it holds that $$(1-\xi)^{\ell-1}[\hat{e}+\beta\sum_{i=1}^{r-1}\hat{p}_i] \leq e+\beta\sum_{i=1}^{r-1}p_i \leq \hat{e}+\beta\sum_{i=1}^{r-1}\hat{p}_i.$$

Hence, we know that for each $(\ell,p_0,p_1,\dots,p_r,e,b)\in \widetilde{\mathcal{F}}_\ell$, there exists $(\ell,p'_0,p'_1$ $,\dots,p'_r,e',b')\in {\mathcal{S}}_\ell$ such that $b'\leq b$, $(1-\xi)^{\ell-1}e' \leq e \leq e'$, and for $i=0,\dots,r$ it holds that $(1-\xi)^{\ell-1}p'_i \leq p_i \leq p'_i$. 

We know that within each box $\mathcal{I}$, only one representative state $(t,\hat{p}_0,\dots,\hat{p}_r,\hat{e})$ will be constructed and stored in $\hat{\mathcal{F}}_t$. By the definition of $\Lambda_\xi$ and $\Gamma_\xi$, we know that for each $(k,{p}'_0,{p}'_1,\dots,{p}'_r,{e}',{b}')\in {\mathcal{S}}_t\cap \Gamma$ there exists $(t,\hat{p}_0,\hat{p}_1,\dots,\hat{p}_r,\hat{e},\hat{b})\in \hat{\mathcal{F}}_t$ such that $\hat{b}\leq b'$, $(1-\xi)\hat{e} \leq e' \leq \hat{e}$, and for $i=0,\dots,r$ it holds that $(1-\xi)\hat{p}_i \leq p'_i \leq \hat{p}_i$. Thus, Lemma~\ref{lemma:round} is proved.
\end{proof}

Now we are ready to prove Theorem~\ref{thm:al2}.
\begin{proof}[Proof of Theorem~\ref{thm:al2}]
We first estimate the overall error incurred in Algorithm \ref{alg:2}. Let $\xi=\epsilon/2(n+m)$. Since $1-(n+m)\xi\leq (1-\xi)^{n+m}$, it is easy to verify that $(1-\xi)^{-n-m}\leq 1+\epsilon$. According to Lemma~\ref{lemma:round}, Algorithm \ref{alg:2} gives an $(1+\epsilon)$-approximation solution for the HD problem, i.e., the expected loss of the defender is no larger than $(1+\epsilon)$ times the minimum expected loss of the defender. 

Now we estimate the overall running time. The total number of distinct dominated states (e.g., $\sum_t|\widehat{\mathcal{F}}_t|$) is bounded by $O((n+m)|\Lambda_\xi|^{r+1}|\Gamma_\xi|)$. We know that $|\Lambda_\xi|\leq O(\frac{n+m}{\epsilon})$. In the meantime, $|\Gamma_\xi|=\log_{1+\xi}(nv_{\max})\leq O(\frac{n+m}{\epsilon}\log(v_{\max}))$ where $v_{\max}=\max_iv_i$. Overall, Algorithm \ref{alg:2} runs in $(n+m)^{r+3}\log(v_{\max})/\epsilon^{r+2}$ time. 
Hence, Theorem~\ref{thm:al2} is proved.
\end{proof}

\section{Experiment}
\label{sec:expreimental-result}




\noindent{\bf Simulation Parameters}.
In our simulation, we set the number of production computers as $n= 255$, 
$m\in \{15,20,25,30\}$, the number of attacker's new attacks as $r\in\{5,10,15\}$, and the defender's budget as $B=\{1000,2000,3000,4000\}$. The defender's perceived value for each production computer $j$, $v_{j,D}$, is generated uniformly at random within $[50,2000]$ (while noting that $v_{j,D}=0$ if $j$ is no production computer); the attacker's perceived value for each non-dummy computer $j$, $v_{j,A}$, is also generated uniformly at random within $[50,2000]$. The cost for deploying honeypot computer $j$ is generated uniformly at random within $[50,200]$. The probability $q_j \in [0,1]$ that the attacker believes $j$ is a honeypot computer will be set in specific experiments. 
{To conduct a fair comparison between the expected losses incurred in different parameter settings, we normalize them via {\em expected relative loss}, which is the ratio between the expected loss and the summation of all $v_{j,D}$, leading to a normalized range $[0,1]$.}


\ignore{
For, $q_j$, where $j \in {production, dummy, honeypot}$we want to create three cases,
\begin{enumerate}
    \item Attacker's prediction is uniform random (Base case): In this case we just used a uniform$[0,1]$.
    \item Attacker is strong at predicting: In this case we would like to take exponential distribution. There is two scenario,
    \begin{enumerate}
        \item when item is a computer, it takes the value of $q_j$ from a exponential distribution which is likely to give value close to $0$. we used the scale factor $\beta = \frac{-1}{log(1-.995)}$ for the exponential distribution. (see \ref{fig:ref1})
        \begin{figure}[!htbp]
    	\centering
    	\includegraphics[width=85mm]{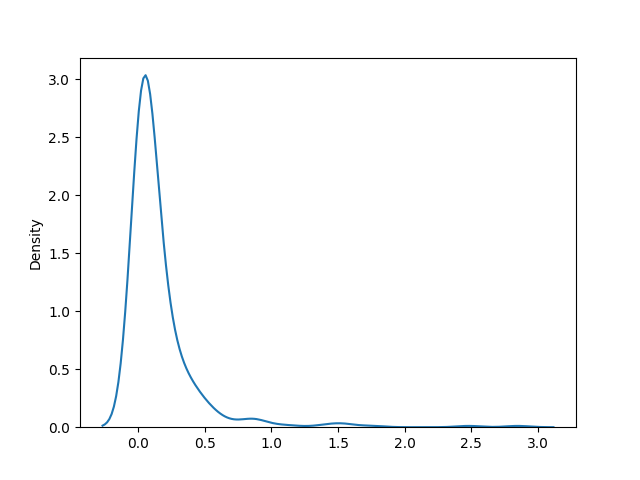}
    	\caption{exponential distribution for $q_j$. {\color{orange} After your confirmation I will immediately start experimenting.}}
    	\label{fig:ref1}
        \end{figure}
        
        \item when item is a honeypot, we used the same distribution but $q_j$ will take the complement of randomly generated $q_j$, which is $(1-q_j)$.
    \end{enumerate}
    \item Attacker is weak at predicting: This is the opposite of the previous one. Which means when $j$ is computer, it will take $(1-q_j)$ and when $j$ is a honeypot it will take the value $q_j$.
    
\end{enumerate}
}

\subsection{Expected Losses under Different Attack Sequences}{\label{seq_gen}}
Now we study how different attack sequences may affect the optimal objective value of the HD problem. Recall that attack sequence is an input to our algorithms.
Given an attack sequence, we can apply Algorithm \ref{alg:2} to compute a near-optimal defense solution to the defender, which leads to essentially the smallest expected loss the defender can hope for. Consequently, the smallest expected loss, i.e., the optimal objective value, reflects how destructive the attacker is when it chooses a certain attack sequence.

While the attack sequence can be arbitrary, we are interested in the attack sequences that are likely to be adopted by a rational attacker. Note that the attacker observes $q_j$ and $v_{j,A}$ for each computer $j\in [n+m]$. Intuitively, the attacker needs to weigh between the potential gain $v_{j,A}$ and the risk that the attacker cannot get this gain, namely probability $q_j$. Therefore, the attacker's risk attitude determines the attack sequence.
Leveraging ideas from economics, we study three types of risk attitudes of the attacker (see, e.g.~\cite{hillson2007understanding, galinkin2021evaluating}): 
    (i) {\em risk-seeking}, meaning that the attacker wants to maximize its revenue fast;
    (ii) {\em risk-averse}, meaning that the attacker wants to minimize its chance of losing a valid new attack;
    (iii) {\em risk-neutral}, meaning that the attacker acts in between risk-seeking and risk-averse.
A formal definition of risk-attitude depends on the notion of {\em utility function}, denoted by $u_j$. We focus on a broad class of utility functions, known as exponential utility \cite{pratt1978risk,camerer2004advances}, which is defined as:

\[
    u_j= 
\begin{cases}
    \frac{1 - e^{-\alpha v_{j,A}} }{\alpha},& \text{if } 
    \alpha \neq 0\\
    v_{j,A},              & \text{if }  \alpha = 0
\end{cases}
\]
where $\alpha$ is the {\em coefficient of absolute risk aversion}, which, roughly speaking, measures how much the attacker is willing to sacrifice the expected value $v_{j,A}$ in order to achieve perfect certainty about the value it can receive. If $\alpha>0$, then the attacker is risk-aversion; if $\alpha=0$, the attacker is risk-neutral; if $\alpha<0$, the attacker is risk-seeking.  
With the utility function, the attacker (with risk attitude specified by $\alpha$) will rank (or prioritize) the non-dummy computers based on the non-increasing order of the expected utility value $(1 - q_j) \cdot u_j$ and use these tanks to formulate an attack sequence. 

\begin{figure}[!htbp]
\centering
	\includegraphics[width=.7\textwidth]{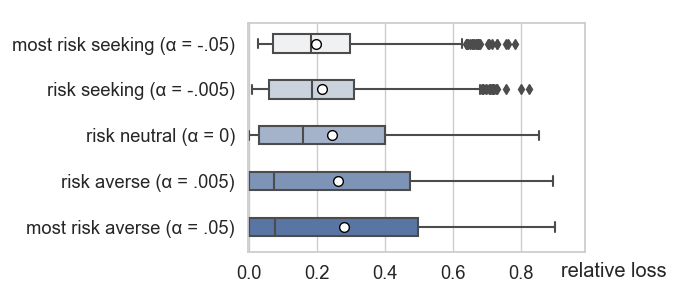}
 \vspace{-2em}
	\caption{Expected relative loss with respect to different risk-attitude {(i.e., $\alpha=-0.05$ for most risk-seeking, $\alpha=-0.005$ for risk-seeking, $\alpha=0$ for risk-neutral, $\alpha=0.005$ for risk-averse, $\alpha=0.05$ for most risk-averse)}.
	\label{fig:boxplot}}
\end{figure}

In our experiment, we choose $\alpha \in \{-0.05,-0.005,0,0.005,0.05\}$, where $\alpha = -0.05$ means the attacker is strongly risk-seeking and $\alpha = 0.05$ means the attacker is strongly risk-averse. For each $\alpha$, we generate 4,560 instances. Figure~\ref{fig:boxplot} uses the standard box-plot to summarize the expected relative loss with respect to different values of $\alpha$.
Recall that for box-plot, the left and right boundary of the rectangle respectively corresponds to the 25th and 75th percentile; the line in the middle marks the 50th percentile or median; the small empty circle within each box is the mean value; the black dots are outliers. 


\begin{insight}
\label{insight-1}
Under exponential utility, the expected relative loss with respect to a risk-seeking attacker 
has a smaller variance than that of a risk-averse attacker, meaning that defending against a risk-seeking attacker is more predictable.
\end{insight}

Insight \ref{insight-1} is counter-intuitive at first glance because risk-averse attackers are, by definition, more deterministic or prefer less variance.
However, it can be understood as follows: risk-averse attackers are very sensitive to the $q_j$'s. Among the randomly generated instances, we observe that the attack sequences of a risk-averse attacker can vary substantially for two instances with similar $q_j$'s; by contrast, the attack sequence of a risk-seeking attacker does not.
Since different attack sequences can cause significant changes to the expected relative loss, the expected relative loss of a risk-seeking attacker has a smaller variance in general.

\subsection{Expected Loss w.r.t. Attacker's Reconnaissance Capability}{\label{strength_of_atkr}}

An attacker's reconnaissance capability is reflected by the $q_j$'s and $v_{j,A}$'s. For a perfectly capable attacker,
it holds that $v_{j,A}=v_{j,D}$ (i.e., the attacker can correctly obtain the value of the digital assets in computer $j$),
$q_j=0$ for each production computer $j$, and $q_j = 1$ for each honeypot computers $j$. For a specific attacker, we measure its reconnaissance capability by comparing it with the perfect attacker, namely b comparing two sequences: the sequence of the expected {values} perceived by  an arbitrary attacker, $(a_j)_{j=1}^{n+m}$ where $a_j=(1-q_j) \cdot v_{j,A}$; the sequence of the expected {value} perceived by  the perfectly capable attacker,  $(v_{j,D})_{j=1}^{n+m}$. We measure the similarity between these two sequences by treating them as $(n+m)$-dimensional vectors and using the {\em cosine similarity} metric widely used in data science~\cite{thearling1999introduction}. The cosine similarity between vector $\Vec{A}=(a_j)_{j=1}^{n+m}$ and $\Vec{D}=(v_{j,D})_{j=1}^{n+m}$  
is defined as:
\vspace{-1em}
\begin{equation*}
     \mathcal{S}_C(\Vec{A},\Vec{D}) = \frac{\Vec{A} \cdot \Vec{D}}{\lVert \Vec{A} \rVert \lVert \Vec{D} \rVert}. 
\end{equation*}
If the cosine similarity is $0$, it means that the two vectors 
are orthogonal to each other in the sense that $a_j>0$ when $v_{j,D}= 0$ and $a_j= 0$ when $v_{j,D}>0$. In this case, the attacker is completely wrong, namely believing that production computers are honeypot computers and that honeypot computers are production computers. If the cosine similarity equals to $1$, then $\frac{a_j}{\sum_{j=1}^{n+m}a_j}= \frac{v_{j,D}}{\sum_{j=1}^{n+m}v_{j,D}}$ for all $j$, meaning the expected {value of each computer perceived by} the attacker is almost always proportional to $v_{j,D}$. 



In our experiment, we use different cosine similarity values by generating the $q_j$'s in a ``semi-random" fashion (because drawing $q_j$'s uniformly at random from $[0,1]$ always yields a large cosine similarity). More specifically, we generate the $q_j$'s according to normal distribution $\mathcal{N}(x,0.1)$ where $x\in\{0.1,0.25,0.5,0.75,0.9\}$, and for each $x$ we use $\mathcal{N}(x,0.1)$ to generate $20\%$ of the $q_j$'s. 

\vspace{-2em}
\begin{figure}[!htbp]
 \centering
	\includegraphics[width=.6\textwidth]{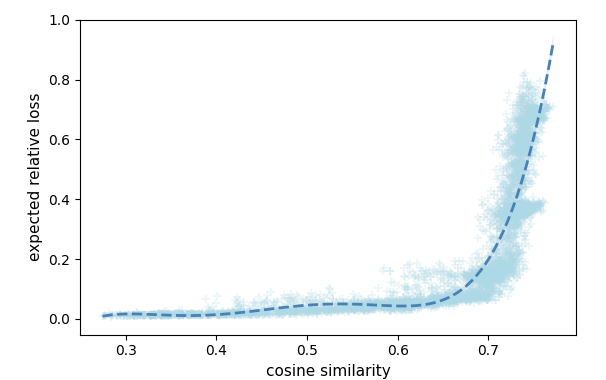}
 \vspace{-1em}
	\caption{Attacker's reconnaissance capability vs. the expected loss to the defender.}
	\label{fig:strength_correlation}
\end{figure}
\vspace{-2em}

Figure~\ref{fig:strength_correlation} plots the experimental result, showing that the expected relative loss increases marginally when the cosine similarity is below a certain threshold, but increases sharply when it is above a certain threshold. 
\begin{insight}
\label{insight-2}
Blending honeypots into production computers is extremely effective when the attacker's reconnaissance capability is below a threshold. 
\end{insight}

\section{Limitations}
\label{sec:limitations}

This study has a number of limitations. First, we assume that the attacker attacks computers in an independent fashion. In practice, the attacker may re-evaluate its perception of both $v_{j,A}$ and $q_j$ after attacking a computer. This is possible because the attacker will receive feedback from attacking a production computer that is different from attacking a honeypot computer. This poses an outstanding open problem for future research: How should we extend the model to incorporate this kind of feedback?
Second, we assume that the new attacks available to the attackers are equally capable of attacking any production computer and will be applicable to any honeypot computer. The former is a simplifying assumption because new attacks may have different capabilities 
and incur different costs (e.g., a zero-day exploit against an operating system would be more powerful and expensive than a zero-day exploit against an application program). The latter is also a simplifying assumption because, for example, an exploit against Microsoft Windows is not applicable to Linux.  Future research needs to investigate how to extend the model to accommodate such differences.
Third, we formalize the HD problem in a ``one-shot'' fashion, meaning that the honeypot computers, once deployed, are never re-deployed (i.e., their IP addresses never change after deployment). The effectiveness of honeypots would be improved by dynamically adjusting the locations of the honeypot computers.


\section{Related Work}
\label{sec:related-work}


\noindent{\bf Prior Studies Related to Honeypots}.
From a conceptual point of view, the present study follows the Cybersecurity Dynamics framework \cite{XuBookChapterCD2019,XuMTD2020,XuSciSec2021SARR}, which aims to rigorously and quantitatively model attack-defense interactions in cyberspace. From a technical point of view, honeypot is a  cyber deception technique.
We refer to \cite{zhu2021survey,al2019autonomous,wang2018cyber,DBLP:journals/csur/HanKB18,DBLP:books/sp/JSSW2016,rowe2016introduction} for  cyber deception in a broader context. 
We divide prior studies on honeypots into three families based on their purposes. 

The first purpose is to study how to leverage honeypots to detect new attacks 
(e.g., \cite{new.att.honeypot,honeypot.pca,li2010towards,DBLP:journals/ccr/KreibichC04,DBLP:journals/cn/PortokalidisB07,DBLP:conf/uss/AnagnostakisSAXMK05}). These studies typically assume that 
honeypot computers and production computers belong to two different networks; this isolation renders honeypot's utility 
questionable because it is easy for attackers to determine the presence of such honeypot networks or honeynets. The present study falls under this thrust of research but advocates blending honeypot computers into production computers.
Moreover, our study is through an innovative lens, which is to maximize the utility of honeypot in forcing 
attackers to expose their new attacks, while minimizing the loss to the defender in terms of its digital assets stored in the compromised production computers. To the best of our knowledge, this is the first study on modeling and analyzing the utility of honeypots.

The second purpose is to study how to prepare or use honeypots
\cite{4267549,carroll2011game,wang2020intelligent,DBLP:conf/hicss/MiahGVTK20,kulkarni2020decoy,anwar2021honeypot,wagener2011adaptive,DBLP:conf/gamesec/HuangZ20,DBLP:conf/gamesec/HuangZ19,aggarwal2021decoys,pibil2012game}.
For example, 
some studies are geared toward making honeypot computers and production computers look the same to disrupt attackers' reconnaissance process
\cite{aggarwal2021decoys,pibil2012game,DBLP:conf/hicss/MiahGVTK20,kulkarni2020decoy};
some studies are geared toward deploying honeypots to defend networks with known vulnerabilities 
(e.g., \cite{anwar2021honeypot}); some studies focus on making honeypot self-adaptive to attacks \cite{wagener2011adaptive,DBLP:conf/gamesec/HuangZ20,DBLP:conf/gamesec/HuangZ19}.
Our study is different from these studies for at least three reasons. (i) Putting into the terminology of our study, these studies can be understood as treating the $v_{j,A}$'s and the $q_j$'s as their goal of study.
Whereas, we treat the $v_{j,A}$'s and $q_j$'s as a stepping-stone for characterizing the utility of honeypots in forcing attackers to expose their new attacks, which are not known to the defender. This means that these studies, which lead to honeypot computers with various degrees of sophistication, can be incorporated into our model to formulate a more comprehensive framework.
(ii) These studies are dominated by game-theoretic models. By contrast, we use a combinatorial optimization approach. This difference in approach can be justified by the difference in the goals because we focus on characterizing the utility of honeypots in forcing attackers to expose their new attacks, while minimizing the loss to the defender in terms of digital assets. 
(iii) Some of these studies assume that the vulnerabilities are known (but unpatched). By contrast, we can accommodate both known (but unpatched) and unknown vulnerabilities (e.g., zero-day vulnerabilities unknown to the defender). 

The third purpose is to study how to leverage honeypot-collected data to forecast cyber threats \cite{XuIEEETIFS2013,XuIEEETIFS2015,XuPLoSOne2015,XuTechnometrics2017,XuMarkerPointProcess2017,DBLP:journals/ejisec/FangXXZ19,XuGrangerCausality2020}. 
These studies lead to innovative statistical or deep learning models which can accurately forecast the number or the type of incoming attacks.
However, these studies leverage traditional honeypot deployments mentioned above, namely that honeypot computers belong to a different network than the production network.
By contrast, we investigate how to optimally blend honeypot computers into production networks, which would enable of more realistic forecasting results \cite{XuSciSec2023-forecasting}.

\smallskip

\noindent{\bf Prior Studies Related to Our Hardness and Algorithmic Results}. We study the defender's optimization problem given a stochastic attacker. This problem is closely related to the bi-level optimization problem 
\cite{caprara2013complexity,dempe2000bilevel,chen2013approximation,qiu2015improved,DBLP:journals/tcs/PferschyNP19}. However, these results are all for a deterministic follower (attacker), while the HD problem studied in this paper involves a stochastic attacker who makes decisions in a probabilistic way. We are not aware of approximation algorithms for bi-level optimization problems where the follower (attacker) is stochastic.   

\section{Conclusion}
\label{sec:conclusion}
Honeypot is an important cyber defense technique, especially in forcing attackers to expose their new attacks (e.g., zero-day exploits).
However, the effectiveness of honeypots has not been systematically investigated. This motivated us to formalize the Honeypot Deployment (HD) problem as one manifestation of understanding the effectiveness of blending honeypot computers into production computers in an enterprise network. {We show that the HD problem is NP-hard, provide a polynomial time approximation scheme to solve it, and present experimental results to draw further insights.} The limitations mentioned above represent interesting open problems for future research. 


\noindent{\bf Acknowledgement}. 
This work was supported in part by NSF Grants \#2122631, \#2115134 and \#2004096, and Colorado State Bill 18-086.

%
%
%

\bibliographystyle{splncs04nat}
{\bibliography{honeypot_ref}}

\begin{thebibliography}{53}
\providecommand{\natexlab}[1]{#1}
\providecommand{\url}[1]{\texttt{#1}}
\providecommand{\urlprefix}{URL }
\expandafter\ifx\csname urlstyle\endcsname\relax
  \providecommand{\doi}[1]{doi:\discretionary{}{}{}#1}\else
  \providecommand{\doi}{doi:\discretionary{}{}{}\begingroup
  \urlstyle{rm}\Url}\fi

\bibitem[{Adleman(1990)}]{Adleman-Crypto-88}
Adleman, L.M.: An abstract theory of computer viruses. In: Advances in
  Cryptology—CRYPTO’88: Proc. 8, Springer (1990)

\bibitem[{Aggarwal et~al.(2021)Aggarwal, Du, Singh, and
  Gonzalez}]{aggarwal2021decoys}
Aggarwal, P., Du, Y., Singh, K., Gonzalez, C.: Decoys in cybersecurity: an
  exploratory study to test the effectiveness of 2-sided deception. arXiv
  preprint arXiv:2108.11037  (2021)

\bibitem[{Al-Shaer et~al.(2019)Al-Shaer, Wei, Kevin, and
  Wang}]{al2019autonomous}
Al-Shaer, E., Wei, J., Kevin, W., Wang, C.: Autonomous cyber deception.
  Springer  (2019)

\bibitem[{Almotairi et~al.(2009)Almotairi, Clark, Mohay, and
  Zimmermann}]{new.att.honeypot}
Almotairi, S., Clark, A., Mohay, G., Zimmermann, J.: A technique for detecting
  new attacks in low-interaction honeypot traffic. In: Proc. International
  Conference on Internet Monitoring and Protection (2009)

\bibitem[{Almotairi et~al.(2008)Almotairi, Clark, Mohay, and
  Zimmermann}]{honeypot.pca}
Almotairi, S.I., Clark, A.J., Mohay, G.M., Zimmermann, J.: Characterization of
  attackers' activities in honeypot traffic using principal component analysis.
  In: Proc. IFIP International Conference on Network and Parallel Computing
  (2008)

\bibitem[{Anagnostakis et~al.(2005)Anagnostakis, Sidiroglou, Akritidis,
  Xinidis, Markatos, and Keromytis}]{DBLP:conf/uss/AnagnostakisSAXMK05}
Anagnostakis, K.G., Sidiroglou, S., Akritidis, P., Xinidis, K., Markatos, E.P.,
  Keromytis, A.D.: Detecting targeted attacks using shadow honeypots. In:
  USENIX Security Symposium (2005)

\bibitem[{Anwar et~al.(2021)Anwar, Kamhoua, Leslie, and
  Kiekintveld}]{anwar2021honeypot}
Anwar, A.H., Kamhoua, C.A., Leslie, N., Kiekintveld, C.D.: Honeypot allocation
  games over attack graphs for cyber deception. Game Theory and Machine
  Learning for Cyber Security  (2021)

\bibitem[{Camerer et~al.(2004)Camerer, Loewenstein, and
  Rabin}]{camerer2004advances}
Camerer, C.F., Loewenstein, G., Rabin, M.: Advances in behavioral economics.
  Princeton university press (2004)

\bibitem[{Caprara et~al.(2013)Caprara, Carvalho, Lodi, and
  Woeginger}]{caprara2013complexity}
Caprara, A., Carvalho, M., Lodi, A., Woeginger, G.J.: A complexity and
  approximability study of the bilevel knapsack problem. In: International
  Conference on Integer Programming and Combinatorial Optimization, IPCO (2013)

\bibitem[{Carroll and Grosu(2011)}]{carroll2011game}
Carroll, T.E., Grosu, D.: A game theoretic investigation of deception in
  network security. Security and Communication Networks \textbf{4}(10) (2011)

\bibitem[{Chen and Zhang(2013)}]{chen2013approximation}
Chen, L., Zhang, G.: Approximation algorithms for a bi-level knapsack problem.
  Theoretical Computer Science \textbf{497}, 1--12 (2013)

\bibitem[{Chen et~al.(2015)Chen, Huang, Xu, and Lai}]{XuPLoSOne2015}
Chen, Y., Huang, Z., Xu, S., Lai, Y.: Spatiotemporal patterns and
  predictability of cyberattacks. PLoS One \textbf{10}(5) (2015)

\bibitem[{Cohen(2006)}]{cohen2006use}
Cohen, F.: The use of deception techniques: Honeypots and decoys. Handbook of
  Information Security \textbf{3}(1) (2006)

\bibitem[{Dempe and Richter(2000)}]{dempe2000bilevel}
Dempe, S., Richter, K.: Bilevel programming with knapsack constraints. Central
  European Journal of Operations Research  (2000)

\bibitem[{Fang et~al.(2019)Fang, Xu, Xu, and
  Zhao}]{DBLP:journals/ejisec/FangXXZ19}
Fang, X., Xu, M., Xu, S., Zhao, .: A deep learning framework for predicting
  cyber attacks rates. EURASIP Journal on Information security  (2019)

\bibitem[{Galinkin et~al.(2021)Galinkin, Carter, and
  Mancoridis}]{galinkin2021evaluating}
Galinkin, E., Carter, J., Mancoridis, S.: Evaluating attacker risk behavior in
  an internet of things ecosystem. In: GameSec (2021)

\bibitem[{Garg and Grosu(2007)}]{4267549}
Garg, N., Grosu, D.: Deception in honeynets: A game-theoretic analysis. In:
  IEEE SMC Information Assurance and Security Workshop (2007)

\bibitem[{Han et~al.(2018)Han, Kheir, and
  Balzarotti}]{DBLP:journals/csur/HanKB18}
Han, X., Kheir, N., Balzarotti, D.: Deception techniques in computer security:
  A research perspective. ACM Computing Surveys \textbf{51}(4) (2018)

\bibitem[{Hillson and Murray-Webster(2007)}]{hillson2007understanding}
Hillson, D., Murray-Webster, R.: Understanding and managing risk attitude
  (2007)

\bibitem[{Huang and Zhu(2019)}]{DBLP:conf/gamesec/HuangZ19}
Huang, L., Zhu, Q.: Adaptive honeypot engagement through reinforcement learning
  of semi-markov decision processes. In: GameSec (2019)

\bibitem[{Huang and Zhu(2020)}]{DBLP:conf/gamesec/HuangZ20}
Huang, L., Zhu, Q.: Farsighted risk mitigation of lateral movement using
  dynamic cognitive honeypots. In: GameSec (2020)

\bibitem[{Kreibich and Crowcroft(2004)}]{DBLP:journals/ccr/KreibichC04}
Kreibich, C., Crowcroft, J.: Honeycomb: creating intrusion detection signatures
  using honeypots. ACM SIGCOMM computer communication review \textbf{34}(1)
  (2004)

\bibitem[{Kulkarni et~al.(2020)Kulkarni, Fu, Luo, Kamhoua, and
  Leslie}]{kulkarni2020decoy}
Kulkarni, A.N., Fu, J., Luo, H., Kamhoua, C.A., Leslie, N.O.: Decoy allocation
  games on graphs with temporal logic objectives. In: GameSec (2020)

\bibitem[{Li et~al.(2010)Li, Goyal, Chen, and Paxson}]{li2010towards}
Li, Z., Goyal, A., Chen, Y., Paxson, V.: Towards situational awareness of
  large-scale botnet probing events. IEEE Trans. Inf. Forensics Secur.
  \textbf{6}(1) (2010)

\bibitem[{Miah et~al.(2020)Miah, Gutierrez, Veliz, Thakoor, and
  Kiekintveld}]{DBLP:conf/hicss/MiahGVTK20}
Miah, M.S., Gutierrez, M., Veliz, O., Thakoor, O., Kiekintveld, C.: Concealing
  cyber-decoys using two-sided feature deception games. In: Hawaii
  International Conference on System Sciences, {HICSS} (2020)

\bibitem[{Morgan(2020)}]{Cybersecurity-Venture}
Morgan, S.: Cybercrime to cost the world \$10.5 trillion annually by 2025.
  \url{https://cybersecurityventures.com/cybercrime-damages-6-trillion-by-2021/}
  (2020)

\bibitem[{Nawrocki et~al.(2016)Nawrocki, W{\"a}hlisch, Schmidt, Keil, and
  Sch{\"o}nfelder}]{nawrocki2016survey}
Nawrocki, M., W{\"a}hlisch, M., Schmidt, T.C., Keil, C., Sch{\"o}nfelder, J.: A
  survey on honeypot software and data analysis. arXiv preprint
  arXiv:1608.06249  (2016)

\bibitem[{NYSDFS(2021)}]{Solarwinds2021}
NYSDFS: Solarwinds cyber espionage attack and institutions' response.
  \url{https://www.dfs.ny.gov/system/files/documents/2021/04/solarwinds{\_}report{\_}2021{\\}.pdf}
  (2021)

\bibitem[{Pendleton et~al.(2016)Pendleton, Garcia-Lebron, Cho, and
  Xu}]{Pendleton:2016}
Pendleton, M., Garcia-Lebron, R., Cho, J.H., Xu, S.: A survey on systems
  security metrics. ACM Computer Survey \textbf{49}(4) (2016)

\bibitem[{Peng et~al.(2017)Peng, Xu, Xu, and Hu}]{XuMarkerPointProcess2017}
Peng, C., Xu, M., Xu, S., Hu, T.: Modeling and predicting extreme cyber attack
  rates via marked point processes. Journal of Applied Statistics
  \textbf{44}(14) (2017)

\bibitem[{Pferschy et~al.(2019)Pferschy, Nicosia, and
  Pacifici}]{DBLP:journals/tcs/PferschyNP19}
Pferschy, U., Nicosia, G., Pacifici, A.: A stackelberg knapsack game with
  weight control. Theoretical Computer Science \textbf{799} (2019)

\bibitem[{P{\'\i}bil et~al.(2012)P{\'\i}bil, Lis{\`y}, Kiekintveld,
  Bo{\v{s}}ansk{\`y}, and P{\v{e}}chou{\v{c}}ek}]{pibil2012game}
P{\'\i}bil, R., Lis{\`y}, V., Kiekintveld, C., Bo{\v{s}}ansk{\`y}, B.,
  P{\v{e}}chou{\v{c}}ek, M.: Game theoretic model of strategic honeypot
  selection in computer networks. In: GameSec (2012)

\bibitem[{Portokalidis and Bos(2007)}]{DBLP:journals/cn/PortokalidisB07}
Portokalidis, G., Bos, H.: Sweetbait: Zero-hour worm detection and containment
  using low-and high-interaction honeypots. Computer Networks \textbf{51}(5)
  (2007)

\bibitem[{Pratt(1978)}]{pratt1978risk}
Pratt, J.W.: Risk aversion in the small and in the large. In: Uncertainty in
  economics (1978)

\bibitem[{Provos et~al.(2004)}]{provos2004virtual}
Provos, N., et~al.: A virtual honeypot framework. In: USENIX Security (2004)

\bibitem[{Qiu and Kern(2015)}]{qiu2015improved}
Qiu, X., Kern, W.: Improved approximation algorithms for a bilevel knapsack
  problem. Theoretical computer science \textbf{595} (2015)

\bibitem[{Rodriguez and Xu(2022)}]{RosaSocialEngineeringKillChainSciSec2022}
Rodriguez, R.M., Xu, S.: Cyber social engineering kill chain. In: SciSec (2022)

\bibitem[{Rowe et~al.(2016)Rowe, Rrushi et~al.}]{rowe2016introduction}
Rowe, N.C., Rrushi, J., et~al.: Introduction to cyberdeception (2016)

\bibitem[{Sun et~al.(2023)Sun, Xu, Schweitzer, Bateman, Kott, and
  Xu}]{XuSciSec2023-forecasting}
Sun, Z., Xu, M., Schweitzer, K., Bateman, R., Kott, A., Xu, S.: Cyber attacks
  against enterprise networks: Characterization, modeling and forecasting. In:
  Proc. of SciSec'2023 (2023)

\bibitem[{Thearling(1999)}]{thearling1999introduction}
Thearling, K.: An introduction to data mining. Direct Marketing Magazine
  (1999)

\bibitem[{Thomas(2016)}]{DBLP:books/sp/JSSW2016}
Thomas, S.: Cyber deception: Building the scientific foundation (2016)

\bibitem[{Trieu-Do et~al.(2021)Trieu-Do, Garcia-Lebron, Xu, Xu, and
  Feng}]{XuGrangerCausality2020}
Trieu-Do, V., Garcia-Lebron, R., Xu, M., Xu, S., Feng, Y.: Characterizing and
  leveraging granger causality in cybersecurity: Framework and case study. ICST
  Transactions on Security and Safety \textbf{7}(25) (2021)

\bibitem[{Wagener et~al.(2011)Wagener, State, Engel, and
  Dulaunoy}]{wagener2011adaptive}
Wagener, G., State, R., Engel, T., Dulaunoy, A.: Adaptive and self-configurable
  honeypots. In: IFIP IEEE International Symposium on Integrated Network
  Management (IM) (2011)

\bibitem[{Wang and Lu(2018)}]{wang2018cyber}
Wang, C., Lu, Z.: Cyber deception: Overview and the road ahead. IEEE Security
  \& Privacy \textbf{16}(2) (2018)

\bibitem[{Wang et~al.(2020)Wang, Pei, Wang, Tang, Zhang, and
  Liu}]{wang2020intelligent}
Wang, S., Pei, Q., Wang, J., Tang, G., Zhang, Y., Liu, X.: An intelligent
  deployment policy for deception resources based on reinforcement learning.
  IEEE Access \textbf{8} (2020)

\bibitem[{Xu et~al.(2017)Xu, Hua, and Xu}]{XuTechnometrics2017}
Xu, M., Hua, L., Xu, S.: A vine copula model for predicting the effectiveness
  of cyber defense early-warning. Technometrics \textbf{59}(4) (2017)

\bibitem[{Xu(2019)}]{XuBookChapterCD2019}
Xu, S.: Cybersecurity dynamics: A foundation for the science of cybersecurity.
  In: Lu, Z., Wang, C. (eds.) Proactive and Dynamic Network Defense, vol.~74,
  Springer Nature Switzerland AG (2019)

\bibitem[{Xu(2020)}]{XuMTD2020}
Xu, S.: The cybersecurity dynamics way of thinking and landscape (invited
  paper). In: ACM Workshop on Moving Target Defense (2020)

\bibitem[{Xu(2021)}]{XuSciSec2021SARR}
Xu, S.: Sarr: A cybersecurity metrics and quantification framework (keynote).
  In: International Conference Science of Cyber Security (SciSec'2021), pp.
  3--17 (2021)

\bibitem[{Yao(1980)}]{yao1980new}
Yao, A.: New algorithms for bin packing. Journal of the ACM \textbf{27}(2)
  (1980)

\bibitem[{Zhan et~al.(2013)Zhan, Xu, and Xu}]{XuIEEETIFS2013}
Zhan, Z., Xu, M., Xu, S.: Characterizing honeypot-captured cyber attacks:
  Statistical framework and case study. IEEE Trans. Inf. Forensics Secur.
  \textbf{8}(11) (2013)

\bibitem[{Zhan et~al.(2015)Zhan, Xu, and Xu}]{XuIEEETIFS2015}
Zhan, Z., Xu, M., Xu, S.: Predicting cyber attack rates with extreme values.
  IEEE Trans. Inf. Forensics Secur. \textbf{10}(8) (2015)

\bibitem[{Zhu et~al.(2021)Zhu, Anwar, Wan, Cho, Kamhoua, and
  Singh}]{zhu2021survey}
Zhu, M., Anwar, A.H., Wan, Z., Cho, J.H., Kamhoua, C.A., Singh, M.P.: A survey
  of defensive deception: Approaches using game theory and machine learning.
  IEEE Communications Surveys \& Tutorials \textbf{23}(4) (2021)

\end{thebibliography}

\clearpage

\appendix

\section{Proof of Theorem~\ref{thm:np-h}}\label{appsec:hardness}
\begin{proof}[Proof of Theorem~\ref{thm:np-h}]
Given an arbitrary instance of Subset Product, we construct an instance of the HD problem as follows: there is only one production computer {(i.e., $n=1$) with value $1$}. There are $m+1$ IP addresses in total, where the production computer is the last computer (i.e., computer $m+1$). The defender can deploy honeypot computers at IP addresses from $1$ to $m$. In particular, deploying honeypot computer at IP address $i$ costs $c_i={\lfloor \log(w_i)M\rfloor}/{M}$. The defender's budget is $B=\lceil \log(k)M \rceil/M$ where $M=k(m+1)$. The probability is $q_i=1/w_i$ for non-dummy computer $i\in [m]$, and $q_{m+1}=0$.  
Set parameter $r=1$ and the threshold of the expected loss to the defender as $T=1/k$.

Note that although $\log k$ and $\log w_i$'s are not rational numbers, for the purpose of determining the value of (e.g.) $\lfloor \log (w_i)M\rfloor$, it suffices to compute $\log w_i$ up to a precision of $O(1/M)$, which can be done in $O(\log M)$ time. It is easy to verify that the input length of the HD problem is $O(m\log k+m\log w_{max})$ where $w_{max}=\max_iw_i$.

Consider the expected loss of the defender. Since the value of the production computer is $1$, $\mathbb{E}(Loss)$ equals the probability that computer $m+1$ is attacked. Since $r=1$, the attacker can attack computer $m+1$ only if it does not attack any computer from $1$ to $m$, which equals to 
$$\prod_{i\in [m], \text{a honeypot computer is deployed at } i}\frac{1}{w_i}.$$


Suppose the answer to the instance of the Subset Product instance is ``yes", then we know that there exists $S'$ such that $\prod_{i\in S'}w_i=k$. Deploying honeypot computers at IP addresses $i$ where $i\in S'$, we know the total cost to the defender equals 
$$\sum_{i\in S'}\frac{\lfloor M\log w_i\rfloor}{M}\le \sum_{i\in S'}\log w_i\le \log k \le \frac{\lceil M\log k\rceil}{M}=B,$$ which is within budget $B$. Meanwhile, the expected loss is
$$\prod_{i\in S'}\frac{1}{w_i}=1/k.$$
Hence, the minimum expected loss for the HD instance is no larger than $1/k$, i.e., the answer for the HD instance is ``yes''.

Suppose the answer for the HD instance is "yes" (i.e., the minimum expected loss to the defender is no larger than $1/k$). Let $S'\subseteq [m]$ be the IP addresses where the defender deploys honeypot computers, we know that
\begin{equation}
\sum_{i\in S'} \log(w_i) - m/M\leq\sum_{i\in S'}c_i\leq B \leq \log(k)+1/M,  \label{eq:hard-1}
\end{equation} and \begin{equation}
\prod_{i\in S'}1/w_i\leq 1/k. \label{eq:hard-2}
\end{equation}
By Combining Eq.\eqref{eq:hard-1} and Eq.\eqref{eq:hard-2}, we know that $$k\leq \prod_{i\in S'}w_i \leq k\cdot 2^{(m+1)/M} <k+1.$$
Consequently, $\prod_{i\in S'}w_i=k$.
Hence, the answer to the Subset Product instance is ``yes".
\end{proof}

\section{Proof of Lemma~\ref{lemma:dp}}\label{appsec:proof-lemmadp}

\begin{proof}
We prove this by induction. It is easy to verify Lemma~\ref{lemma:dp} for $t=1$. Suppose it holds for $t=\ell-1$, that is, there exists some $(\ell-1,p_{\ell-1,0}(x^*),\dots,p_{\ell-1,r}(x^*),$ $e_{\ell-1}x^*),b)\in \widetilde{\mathcal{F}}_{\ell-1}$ such that $b\leq b_{\ell-1}(x^*)$, we prove Lemma~\ref{lemma:dp} for $t=\ell$. We distinguish two cases based on computer $\ell$ and the optimal solution $x^*$: 
\begin{itemize}
\item  If $\ell\in\mathcal{P}$, then for the optimal solution $x^*$ it holds that $p_{\ell,k}(x^*)=p_{\ell-1,k}(x^*)$ for $0\le k\le r$, $e_\ell(x^*)=e_{\ell-1}(x^*)+(1-q_\ell)v_\ell\sum_{k=1}^{r-1}p_{\ell-1,k}$, and $b_{\ell}(x^*)=b_{\ell-1}(x^*)$. 
Given that $(\ell-1,p_{\ell-1,0}(x^*),\dots,p_{\ell-1,r}(x^*),e_{\ell-1}(x^*),b)\in \widetilde{\mathcal{F}}_{\ell-1}\subseteq \mathcal{F}_{\ell-1}$, according to the recursive computation of $\mathcal{F}_{\ell}$ from $\mathcal{F}_{\ell-1}$, $(\ell,p_{\ell,0}(x^*),$ $\dots,p_{\ell,r}(x^*),$ $e_\ell(x^*),b_\ell(x^*))\in {\mathcal{F}}_{\ell}$. Hence, there exists some  $(\ell,p_{\ell,0}(x^*),\dots,$ $p_{\ell,r}(x^*), e_\ell(x^*),b)\in \widetilde{\mathcal{F}}_{\ell}$ such that $b\leq b_\ell(x^*)$ by the definition of $\widetilde{\mathcal{F}}_{\ell}$.
\item If $\ell\in \mathcal{H}$, we further distinguish two sub-cases: (i) If computer $\ell$ is a dummy computer in $x^*$, then $p_{\ell,k}(x^*)=p_{\ell-1,k}(x^*)$ for $0\le k\le r$, $e_\ell(x^*)=e_{\ell-1}(x^*)$, and $b_{\ell}(x^*)=b_{\ell-1}(x^*)$. (ii) If computer $\ell$ is a honeypot computer in $x^*$, then we have $p_{\ell,0}(x^*)=q_\ell p_{\ell-1,0}(x^*)$, $p_{\ell,k}(x^*)=p_{\ell-1,k}(x^*)q_\ell+(1-q_\ell)p_{\ell-1,k-1}(x^*)$ for $1\le k\le r$, 
and $e_\ell(x^*)=e_{\ell-1}(x^*)$, $b_{\ell}(x^*)=b_{\ell-1}(x^*)+c_t$. In both sub-cases, the recursive computation of $\mathcal{F}_{\ell}$ from $\mathcal{F}_{\ell-1}$ implies that $(\ell,p_{\ell,0}(x^*),\dots,$ $p_{\ell,r}(x^*)$, $e_\ell(x^*),b_{\ell}(x^*))\in {\mathcal{F}}_{\ell}$. Hence there exists  $(\ell,p_{\ell,0}(x^*),\dots,p_{\ell,r}(x^*)$, $e_\ell(x^*),b)\in \widetilde{\mathcal{F}}_{\ell}$ such that $b\leq b_\ell(x^*)$ by the definition of $\widetilde{\mathcal{F}}_{\ell}$.
\end{itemize}
Hence, Lemma~\ref{lemma:dp} is true for all $t\in [n+m]$.

\end{proof}
\end{document}